\journal{....}
\theoremstyle{plain}
\newtheorem{thm}{Theorem}[section]
\theoremstyle{definition}
\newtheorem{rem}{Remark}[section]
\newtheorem{cor}{Corollary}[section]
\newtheorem{prop}{Proposition}[section]
\theoremstyle {plain}
\newtheorem{lem}[thm]{Lemma}
\begin{document}
	
\begin{frontmatter}
\title{Impact of imperfect vaccine, vaccine trade-off and population turnover on infectious disease dynamics}
		
		
	\author[mymainaddress1,mymainaddress2]{Hetsron L. Nyandjo-Bamen}
			\author[mymainaddress1]{Jean Marie Ntaganda}
		\author[mymainaddress3]{Aurélien Tellier}
		\cortext[mycorrespondingauthor]{Corresponding authors}
		\author[mymainaddress4,mymainaddress2]{Olivier Menoukeu-Pamen}
		\address[mymainaddress1]{Department of Mathematics, School of Science, College of Science and Technology, University of Rwanda, Rwanda}
		\address[mymainaddress2]{African Institute for Mathematical Sciences, Ghana}
		\address[mymainaddress3]{Population Genetics, Department of Life Science systems, School of Life Sciences, Technical University of Munich  85354 Freising, Germany}
		\address[mymainaddress4]{IFAM, Department of Mathematical Sciences, University of Liverpool, United Kingdom}
		
		\vspace{-0.5cm}
\begin{abstract}
Vaccination is essential for the management of infectious diseases, many of which continue to pose devastating public health and economic challenges across the world. However, many vaccines are imperfect having only a partial protective effect in decreasing disease transmission and/or favouring recovery of infected individuals, and possibly exhibiting trade-off between these two properties. Furthermore, population turnover, that is the rate at which individuals enter and exit the population, is another key factor determining the epidemiological dynamics. While these factors have yet been studied separately, we investigate the interplay between the efficiency and property of an imperfect vaccine and population turnover. We build a mathematical model with frequency incidence rate, a recovered compartment, and an heterogeneous host population with respect to vaccination. We first compute the basic reproduction number $\mathcal{R}_0$ and study the global stability of the equilibrium points. Using a sensitivity analysis, we then assess the most influential parameters determining the total number of infected and $\mathcal{R}_0$ over time. We derive analytically and numerically conditions for the vaccination coverage and efficiency to achieve disease eradication ($\mathcal{R}_0 < 1$) assuming different intensity of the population turnover (weak and strong), vaccine properties (transmission and/or recovery) and trade-off between the latter. We show that the minimum vaccination coverage increases with lower population turnover, decreases with higher vaccine efficiency (transmission or recovery), and is increased/decreased by up to 15\% depending on the trade-off between the vaccine properties. We conclude that the coverage target for vaccination campaigns should be evaluated based on the interplay between these factors.  

\end{abstract}
		
\begin{keyword}
Imperfect vaccine; Vaccination coverage; Vaccine trade-off; Population turnover;  Mathematical model; Global stability; Sensibility analysis.
\end{keyword}
		
	\end{frontmatter}

\newpage
\section{Introduction}
\label{sec:introduction}
{\allowdisplaybreaks
Vaccination is one of the most effective public health policies for protecting humans and animals from infectious diseases. Global vaccination campaigns have helped eradicate diseases such as smallpox, measles, poliomyelitis, rinderpest in most parts of the world, ultimately saving the lives of millions of humans and animals. A perfect vaccine would keep vaccinated people from becoming infected when exposed to the pathogen. An imperfect vaccine, one that does not prevent vaccinated individuals from becoming infected upon pathogen exposure, may still be beneficial in various ways \cite{Anderson1991}. For example, imperfect vaccines may provide benefits such as preventing infection, limiting parasite within-host growth and thus reducing the damage done to the host \cite{Vale2014}, or preventing transmission by infected hosts \cite{gandon2003imperfect}. As we have seen recently with the epidemic of Covid-19, imperfect vaccines can be used to reduce the number of infected individuals, but also to protect individuals at risk of developing the more lethal form of the infection, especially when the efficiency of vaccination may be volatile and decreases due to the appearance of new variants of the virus \cite{Hwang2021, Ioannidis2021, Dagan2021}.}

The effectiveness of a given vaccine is determined not only by its biochemical and immunological properties, but also by how the vaccine is deployed and what other health management (biosecurity) measures are in place. Maintaining herd immunity during a disease outbreak, for example, has been promoted as a highly effective disease control strategy \cite{Djatcha2017,Ashby2021, Mancusowill2021}. However, a continuous influx of new susceptible, possibly unvaccinated individuals contributes to the disease's long-term persistence in the population \cite{Scherer2002, Pulliam2007}. A frequent introduction of pathogen into a partially immune population with intermediate levels of population immunity can lead to an epidemic of longer duration and/or higher total number of infectious individuals than the introduction into a naive population \cite{Pulliam2007}.  This phenomenon is named as "epidemic enhancement" \cite{Pulliam2007}. More generally, the population turnover rate, that is the rate at which individuals can enter and exit the considered population, may affect the effectiveness of control strategies \cite{Knight2020}. In human but also domesticated animals, population turnover takes the form of immigration and emigration in and out of the population, as well as birth and death of individuals. The turnover is an often neglected factor in epidemiology when generalizing predictions of modelling from human to domesticated and animal populations. \\

Moreover, a second parameter of importance in studying the efficiency of vaccination strategies, is the existence of biological trade-offs in epidemiology. The prime example, is the trade-off between parasite virulence and transmission rate which raises challenges for vaccine manufacturing. Indeed, in the seminal paper by Gandon \textit{et al.} \cite{gandon2003imperfect}, it is predicted that vaccines affecting disease transmission may lead to a decrease of parasite virulence, while other types of vaccines (reducing within-host growth rate) may lead to an increase of parasite virulence, and thus the counter-effect of a worst epidemiological outcome. Interestingly, much work has been devoted to generate precise predictions for virulence evolution in known parasites by incorporating empirical characterizations of vaccine effects into models capturing the epidemiological details of a given system \cite{Gandon2008, Alizon2009, Cressler2016}. In contrast, biochemical and immunological trade-offs of the vaccine itself have received little attention. We mean here that vaccination can affect several aspects of the disease dynamics, such as within-host growth and transmission, with possible trade-offs between these characteristics. For example, a vaccine reducing within-host growth may be more or less effective in reducing disease transmission. We therefore generalize the definition of imperfect vaccines as providing partial protection (non-maximal efficiency) against infection (decreasing transmission), partially enhancing (not fully) recovery of infected individuals, and possible trade-off between these two properties. There has been remarkably little work done to generally assess how the interplay between different vaccine properties, trade-offs, and vaccination strategies influences the burden of the epidemic in an heterogeneous community with imperfect vaccination. \\

The aim of this study is therefore to assess, through mathematical modelling, whether the use of vaccines that decrease the infection is more efficient to eradicate the disease in an heterogeneous community than a vaccine that both reduces the infection and favours recovery, or a vaccine reducing the infection rate but favouring recovery. We also want to assess whether these results depend on the effect of population turnover, in order to generalize our results to animal populations. \\
The paper is organized as follows. First, the model is formulated in Section 2. We then compute the basic properties of the steady state solutions as well as the  existence of a local and global stability of the equilibrium points of the model (Section 3). We then provide a numerical sensitivity of the model and examples of numerical analyses for different parameter values describing the interaction between population turnover and vaccine trade-offs on the epidemiological outcome. We conclude by providing predictions on the applicability of these results to vaccination strategies in human but also domesticated animal species for which turnover rates represent different end of a continuum.

\section{Model formulation}
\label{subsec:modelformulation}
{\allowdisplaybreaks
The formulation of the model is based on compartmental modeling \cite{anderson2013compartment}, which consists in creating virtual reservoirs called compartments. A compartment is a kinetically homogeneous structure. This means that any individual who enters a compartment is identical, from the epidemiological point of view, to any other already present in that compartment. A mathematical model therefore consists of describing the flow of individuals between the various compartments.
	
To study the dynamic of an infectious disease during and after the vaccination campaign, we modify the model formulated in 	\cite{gandon2003imperfect} by adding a recovered compartment and we consider a frequency-dependent disease transmission (incidence rate). The model takes in to account only  host-to-host transmission of the disease. Since many vaccines do not guaranty a perfect immunity, we consider an heterogeneous host community with two types of hosts: fully susceptible to the disease, or partially resistant to infection due to the imperfect vaccination. The fully susceptible hosts consist of uninfected ($S_1$) and infected ($I_1$) individuals. And among the partially resistant hosts, we find the uninfected ($S_2$) and the infected ($I_2$) individuals. All infected individuals (fully susceptible or partially resistant) can become recovered ($R$), and all recovered individuals are fully immune to reinfection \cite{gandon2007evolutionary}. Thus, the total population at time $t$, $N(t)$ is given by
	$$N(t) = S_1(t) +S_2(t) + I_1(t)+ I_2(t) + R(t).$$
We assume the parasite population to be monomorphic (having only one type or genotype). We also assume that new uninfected hosts arise through birth and immigration at constant rate, $\theta$. Among these new uninfected, a proportion, $p$, is partially immune due to the vaccination, while the remaining proportion $1-p$ is completely vulnerable to the parasite. Uninfected, infected and recovered hosts die naturally at a rate, $\mu$ and infected hosts suffer additional mortality due to the virulence of the parasite. Since host resistance may reduce the impact of parasite \cite{gandon2003imperfect}, we assume the virulence of the parasite on  fully susceptible hosts, $d_1$, is greater than the one on partially resistant hosts, $d_2.$ Uninfected hosts become infected with the forces of infection $\lambda_1(t) = \beta_{11}\dfrac{I_1(t)}{N(t)}+\beta_{12}\dfrac{I_2(t)}{N(t)}$ \hspace{.2cm} \text{and}\hspace{.2cm} $\lambda_2(t) = \beta_{21}\dfrac{I_1(t)}{N(t)}+\beta_{22}\dfrac{I_2(t)}{N(t)}$ when they are fully susceptible or partially resistant, respectively. And since the resistance can decrease the probability of becoming infected \cite{gandon2003imperfect}, we assume $\beta_{21}\leqslant \beta_{11}$ and $\beta_{22}\leqslant \beta_{12}$. Recovery rates may differ between the fully susceptible, $\gamma_{1}$ and the partially resistant host, $\gamma_{2}$. The schematic diagram of the model is as shown  in Figure \ref{mod}.
	\begin{figure}[!h]
		\centering 
		\includegraphics[width=0.8\textwidth]{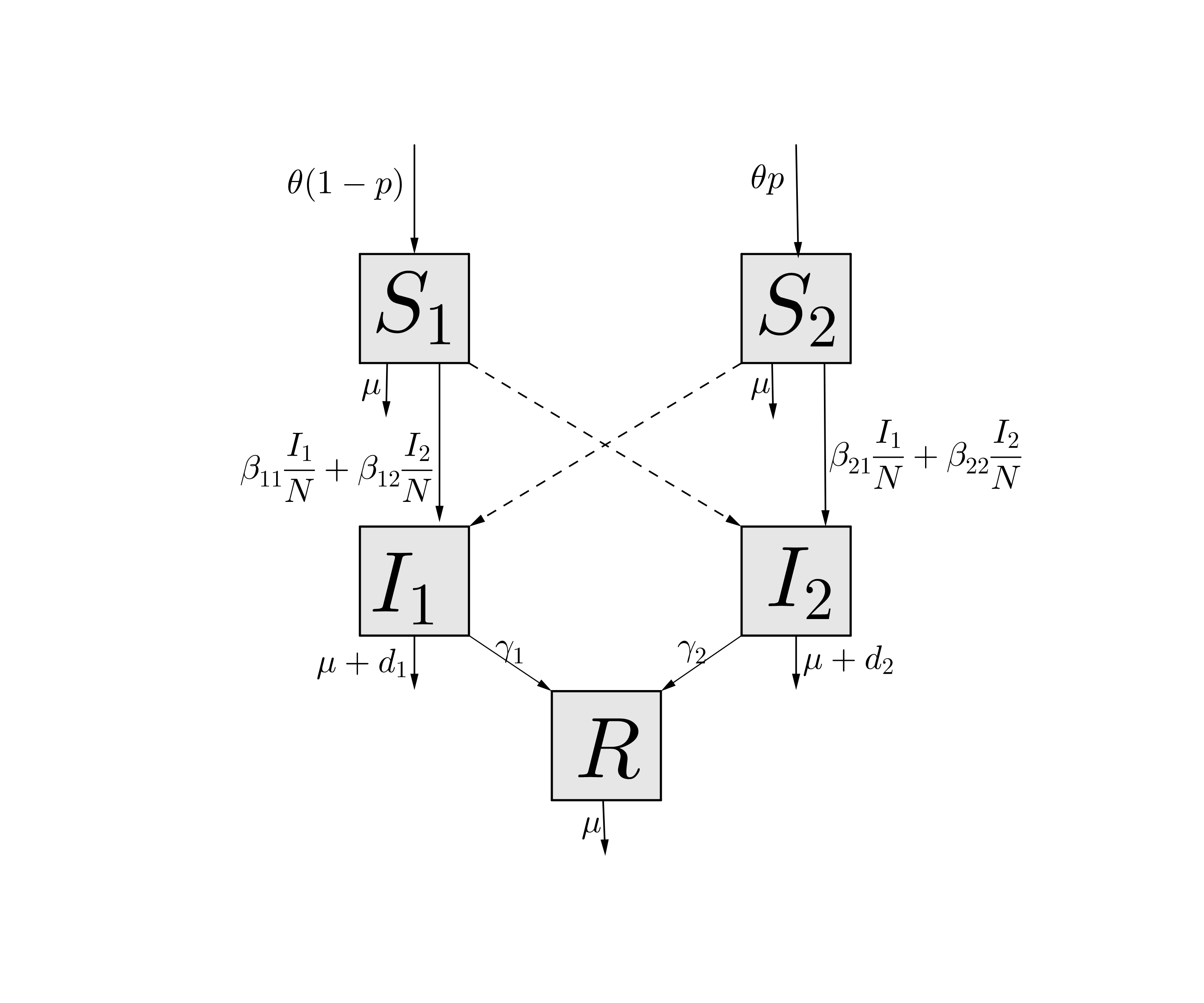}
		\caption{Schematic diagram of the epidemiological model with imperfect vaccination.}\label{mod}	
	\end{figure}
	Mathematically, the model is as follows: 
	\begin{equation}\label{deterministic}
		\left\{\begin{array}{llll}
			\dfrac{\mathrm{d}  S_1}{\mathrm{d}t} & = &\theta(1-p) -\lambda_1(t)S_1(t)-\mu S_1(t), \\\\
			\dfrac{\mathrm{d}S_2}{\mathrm{d}t} & = &\theta p -\lambda_2(t)S_2(t)-\mu S_2(t), \\\\ 
			\dfrac{\mathrm{d}I_1}{\mathrm{d}t} & = &\lambda_1(t)S_1(t)-(\mu + \gamma_1 +d_1)I_1(t), \\\\
			\dfrac{\mathrm{d}I_2}{\mathrm{d}t} & = &\lambda_2(t)S_2(t)-(\mu + \gamma_2 +d_2)I_2(t), \\\\
			\dfrac{\mathrm{d}R}{\mathrm{d}t} & = &\gamma_1I_1(t)+\gamma_2I_2(t)-\mu R(t).
		\end{array}\right.
	\end{equation}
	A summary of the biological significance of the model's parameters (\ref{deterministic}) is given in Table \ref{parameter}.

	\begin{table}[!h]
		\caption{Description and value of the model's parameters.} \label{parameter}
		\centering 
		\begin{tabular}{clccc}
			
			\hline
			\textbf{Parameter} &\textbf{Description} &\textbf{Units}& \textbf{Value}&\textbf{Source }\\\hline
			$\theta$ & Recruitment rate & $person.day^{-1}$&variable &Assumed\\
			$\mu$ & Natural mortality rate &  $day^{-1}$&variable &Assumed\\
			$p$ & Proportion of new hosts vaccinated & - &variable &Assumed\\
			$\beta_{11}$ & Transmission rate form $I_1$ to $S_1$ & $day^{-1}$&variable &Assumed\\
			$\beta_{12}$& Transmission rate form $I_1$ to $S_2$ & $day^{-1}$&variable &Assumed\\
			$\beta_{21}$ & Transmission rate form $I_2$ to $S_1$ & $day^{-1}$&variable &Assumed\\
			$\beta_{22}$ &Transmission rate form $I_2$ to $S_2$  & $day^{-1}$&variable &Assumed\\
			$d_1$ & Mortality rate due to infection of $S_1$   &  $day^{-1}$& $0.0008$  & \cite{Mancusowill2021}\\
			$d_2$ & Mortality rate due to infection of $S_2$  &  $day^{-1}$ & $0.0001$ &\cite{Mancusowill2021}\\
			$\gamma_1$ & Recovery rate of $I_1$ &  $day^{-1}$& $0.1$  &\cite{Mancusowill2021}\\
			$\gamma_2$ & Recovery rate of $I_2$ & $day^{-1}$& $0.13$  &\cite{Mancusowill2021} \\
			\hline
		\end{tabular}
		
	\end{table}
}
\section{Mathematical analysis}
\subsection{Basic properties}
\label{subsec: basicproperties}
{\allowdisplaybreaks
First, we study the basic characteristics of the system solutions: the existence, positivity and  boundedness of solutions. These are 1) essential to make sure that the model (\ref{deterministic}) is well defined mathematically and epidemiologically, and 2) useful for the proofs of the stability results.
}

\subsubsection{Positivity of solutions}
\label{subsubsec: positivityofsolution}
{\allowdisplaybreaks
For any associated Cauchy problem, the system (\ref{deterministic}) which is a $C^{\infty}$-differentiable system, has a unique maximal solution.
	
\begin{thm}\label{positivity}
For any initial condition $(t_0 = 0, X_0=(S_1(0), S_2(0), I_1(0),I_2(0),R(0))\in \mathbb{R}^{5}_{+})$, and for $T\in ]0, +\infty]$, the maximal solution $([0, T[, X=(S_1(t), S_2(t), I_1(t),I_2(t),R(t)))$ of the Cauchy problem associated to the system (\ref{deterministic}) is non-negative.
\end{thm}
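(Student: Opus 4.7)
The plan is to apply the classical invariance argument based on Nagumo's tangent-cone condition: the closed non-negative orthant $\mathbb{R}^5_+$ is positively invariant under the flow of \eqref{deterministic} provided that, at every boundary point, the corresponding component of the vector field does not push the trajectory out of the orthant. Before invoking this, I first record that existence and uniqueness of a maximal solution on some interval $[0,T[$ is immediate because the right-hand side of \eqref{deterministic} is $C^\infty$ on the open set where $N>0$. I would therefore begin by checking that $N(t)>0$ on $[0,T[$ whenever $N(0)>0$: summing the five equations gives $\dot N=\theta-\mu N-d_1I_1-d_2I_2$, so if $N$ ever reached $0$ we would have $\dot N\geq\theta>0$ there, preventing $N$ from vanishing. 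The degenerate case $N(0)=0$ forces all components to be zero initially, and the trajectory is then non-negative trivially (it starts at the constant recruitment-fed equilibrium of the susceptible equations).

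Next, I would argue by contradiction. Assume one of the components becomes negative on $[0,T[$, and let
\[
t^\ast=\inf\{\,t\in[0,T[\ :\ \min(S_1(t),S_2(t),I_1(t),I_2(t),R(t))<0\,\}.
\]
By continuity of the solution and the assumption $X_0\in\mathbb{R}^5_+$, at $t=t^\ast$ all five components are non-negative and at least one, call it $X_i$, satisfies $X_i(t^\ast)=0$ with $\dot X_i(t^\ast)\leq 0$. I would then check each of the five possibilities and show that $\dot X_i(t^\ast)\geq 0$, ruling out a strict decrease below zero:
\begin{itemize}
\item If $S_1(t^\ast)=0$, then $\dot S_1(t^\ast)=\theta(1-p)\geq 0$.
\item If $S_2(t^\ast)=0$, then $\dot S_2(t^\ast)=\theta p\geq 0$.
\item If $I_1(t^\ast)=0$, then $\lambda_1(t^\ast)S_1(t^\ast)=\beta_{12}\,S_1(t^\ast)I_2(t^\ast)/N(t^\ast)\geq 0$, since $S_1$, $I_2$, $N$ are all non-negative (and $N>0$) at $t^\ast$.
\item If $I_2(t^\ast)=0$, analogously $\dot I_2(t^\ast)=\beta_{21}\,S_2(t^\ast)I_1(t^\ast)/N(t^\ast)\geq 0$.
\item If $R(t^\ast)=0$, then $\dot R(t^\ast)=\gamma_1 I_1(t^\ast)+\gamma_2 I_2(t^\ast)\geq 0$.
\end{itemize}
In every case the inward derivative is compatible only with $\dot X_i(t^\ast)=0$, and a refined argument (or an equivalent variation-of-constants representation, e.g.\ writing $S_1(t)=S_1(0)e^{-\int_0^t(\lambda_1+\mu)\,ds}+\int_0^t\theta(1-p)e^{-\int_s^t(\lambda_1+\mu)\,du}\,ds$, which is manifestly non-negative as soon as $\lambda_1$ is well-defined) shows that $X_i$ cannot become strictly negative immediately after $t^\ast$. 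This contradicts the definition of $t^\ast$, proving the theorem.

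The only genuinely delicate step is the cross-coupling inside $\lambda_1$ and $\lambda_2$: when one of $I_1,I_2$ hits zero, the non-negativity of its derivative depends on the other already being non-negative at the same instant. This is precisely why the argument must be phrased via the \emph{simultaneous} infimum $t^\ast$ rather than component by component; the minimum operator packages all coordinates together, so at $t^\ast$ every component is still non-negative and the cross-terms $\beta_{12}S_1I_2/N$ and $\beta_{21}S_2I_1/N$ are automatically non-negative. I would also briefly note that the same reasoning, combined with the bound $\dot N\leq\theta-\mu N$, prevents $N$ from vanishing on $[0,T[$, so the division by $N$ in $\lambda_1,\lambda_2$ is harmless throughout the argument.
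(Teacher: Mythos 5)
Your proof follows essentially the same barrier argument as the paper: both identify the first time at which some component reaches the boundary of the non-negative orthant and derive a contradiction from the sign of the corresponding derivative there, using the simultaneous treatment of all coordinates so that the cross-terms in $\lambda_1,\lambda_2$ are non-negative at that instant. You are in fact somewhat more careful than the paper on the delicate cases ($I_1$, $I_2$, $R$, and $S_1$ or $S_2$ when $p=1$ or $p=0$), where the boundary derivative is only $\geq 0$ rather than strictly positive and the paper's appeal to a ``similar argument'' silently skips the variation-of-constants refinement (and the positivity of $N$) that you supply.
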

\begin{proof}
Let $$\Delta= \{\tilde{t}\in [0, T[\hspace{.2cm}|\hspace{.2cm} S_1(t)>0, S_2(t)>0, I_1(t)>0, I_2(t)>0 \hspace{.2cm}\text{and} \hspace{.2cm}R(t)>0\hspace{.2cm}\forall t\in [0,\tilde{t}[ \}.$$
By continuity of the functions $S_1$, $S_2$, $I_1$, $I_2$ and $R$, one can see that $\Delta \neq \emptyset$. Let $\tilde{T}= \text{sup}\Delta$ be the supremum of  $\Delta$. Now, we want to prove that $\tilde{T}=T$.
		
Assume $\tilde{T} < T,$ then we have that $S_1$, $S_2$, $I_1$, $I_2$ and $R$ are simultaneously  positive on $]0,\tilde{T}[$. Then, at least one of the following conditions is satisfied at time $\tilde{T}$: $S_1(\tilde{T})=0$ and $	\dfrac{\mathrm{dS_1(\tilde{T})}}{\mathrm{d}t}\leqslant0$, $S_2(\tilde{T})=0$ and $\dfrac{\mathrm{dS_2(\tilde{T})}}{\mathrm{d}t}\leqslant0$, $I_1(\tilde{T})=0$ and $\dfrac{\mathrm{dI_1(\tilde{T})}}{\mathrm{d}t}\leqslant0$, $I_2(\tilde{T})=0$ and $\dfrac{\mathrm{dI_2(\tilde{T})}}{\mathrm{d}t}\leqslant$ or $R(\tilde{T})=0$ and $\dfrac{\mathrm{d R(\tilde{T})}}{\mathrm{d}t}\leqslant0$.
		
Suppose that $S_1(\tilde{T})=0$ and $\dfrac{\mathrm{dS_1(\tilde{T})}}{\mathrm{d}t}\leqslant0$,  then we deduce from the first equation of system (\ref{deterministic}) that 
\begin{align*}
\dfrac{\mathrm{d}  S_1(\tilde{T})}{\mathrm{d}t} & = \theta(1-p) -\lambda_1(\tilde{T})S_1(\tilde{T})-\mu S_1(\tilde{T})\\\\	
&= \theta(1-p) > 0
\end{align*}	
which is a contradiction to the previous claim that $\dfrac{\mathrm{dS_1(\tilde{T})}}{\mathrm{d}t}\leqslant0$.
		
We can use the similar argument for all the remaining state variables. Then, $\tilde{T}=T$ and consequently the maximal solution $(S_1(t), S_2(t), I_1(t),I_2(t),R(t))$ of the Cauchy problem related with system (\ref{deterministic}) is positive.
\end{proof}
Therefore, the variables of the system (\ref{deterministic}) are positive for all time $t > 0$. In other terms, solutions of the system (\ref{deterministic}) with non-negative initial conditions will stay positive for all $t > 0$.
	
We now prove a useful lemma.	
}
\subsubsection{Boundedness of solutions}
\label{subsubsec: boundednessofsolutions}
{\allowdisplaybreaks
Since the variables of model (\ref{deterministic}) are non-negative and we are dealing with the dynamic of a number of individuals, it is important and realistic that the total number of individuals does not tend towards infinity. The following result yields:
\begin{lem}
The closed set $$\Omega = \bigg\{(S_1(t), S_2(t), I_1(t),I_2(t),R(t))\in \mathbb{R}^{5}_{+}, \hspace{.2cm}N(t) \leqslant \dfrac{\theta}{\mu} \bigg\}$$ is positively invariant and attracting for the system (\ref{deterministic}).
\end{lem}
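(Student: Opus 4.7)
The plan is to reduce the five-dimensional system to a single scalar differential inequality for the total population $N(t) = S_1(t)+S_2(t)+I_1(t)+I_2(t)+R(t)$ and then invoke a standard comparison argument.

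First, I would add the five equations of system (\ref{deterministic}) term by term. The incidence terms $\lambda_1 S_1$ and $\lambda_2 S_2$ cancel between the $S_i$ and $I_i$ equations, and the recovery terms $\gamma_1 I_1, \gamma_2 I_2$ cancel between the $I_i$ equations and the $R$ equation, leaving
\begin{equation*}
\frac{dN}{dt} \;=\; \theta - \mu N(t) - d_1 I_1(t) - d_2 I_2(t).
\end{equation*}
Since by Theorem \ref{positivity} the components $I_1, I_2$ remain non-negative on the interval of existence, and since $d_1, d_2 \geq 0$, this yields the differential inequality
\begin{equation*}
\frac{dN}{dt} \;\leq\; \theta - \mu N(t).
\end{equation*}

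Second, I would apply a standard Gronwall-type comparison (or integrate the linear majorant directly) to obtain
\begin{equation*}
N(t) \;\leq\; N(0) e^{-\mu t} + \frac{\theta}{\mu}\bigl(1 - e^{-\mu t}\bigr) \quad \text{for all } t\geq 0.
\end{equation*}
From this explicit bound, two conclusions follow immediately: if $N(0) \leq \theta/\mu$ then $N(t) \leq \theta/\mu$ for all $t \geq 0$, which together with the positivity given by Theorem \ref{positivity} proves that $\Omega$ is positively invariant; and regardless of the initial condition, $\limsup_{t\to\infty} N(t) \leq \theta/\mu$, so every non-negative trajectory is attracted to $\Omega$. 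I would also note that the explicit bound prevents finite-time blow-up, hence the maximal solutions are global, so $T = +\infty$ in the notation of Theorem \ref{positivity}.

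The argument is essentially routine, so there is no real obstacle; the only point requiring a little care is checking that the cancellation of the $\lambda_i S_i$ and $\gamma_i I_i$ terms is done correctly when summing the equations, and that the non-negativity of $I_1, I_2$ (guaranteed by the previous theorem) is what legitimizes dropping the $-d_1 I_1 - d_2 I_2$ terms to obtain the scalar majorant. Once those points are verified, the comparison lemma delivers both positive invariance and the attracting property in one stroke.
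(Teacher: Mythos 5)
Your proposal is correct and follows essentially the same route as the paper: summing the equations to obtain $\frac{\mathrm{d}N}{\mathrm{d}t}=\theta-\mu N-d_1I_1-d_2I_2\leqslant\theta-\mu N$, integrating the linear majorant to get the bound $N(t)\leqslant\frac{\theta}{\mu}+\bigl(N(0)-\frac{\theta}{\mu}\bigr)e^{-\mu t}$ (algebraically identical to yours), and reading off invariance and attraction from it. Your closing remark on global existence matches the remark the paper places immediately after the lemma.
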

\begin{proof}
Using the system (\ref{deterministic}), the dynamics of the total human population satisfy:
$$\dfrac{\mathrm{d}N}{\mathrm{d}t}= \theta -\mu N-d_1I_1-d_2I_2\leqslant \theta-\mu N.$$
		
Integrating both sides of the expression above, we deduce that
		
\begin{equation}\label{invariant}
N(t)\leqslant\dfrac{\theta}{\mu}+\bigg(N(0)-\dfrac{\theta}{\mu}\bigg)e^{-\mu t},\quad \forall t\geqslant 0,
\end{equation}
where $N(0)$ is the value of $N(t)$ at the beginning. 
		
We deduce that if $N(0)\leqslant \dfrac{\theta}{\mu}$, then $0\leqslant N(t)\leqslant \dfrac{\theta}{\mu}$,   $\forall t\geqslant 0$ and $\Omega$ is positively invariant. If $N(0)\geqslant \dfrac{\theta}{\mu}$, then from (\ref{invariant}) the total population decreases and the solutions enters $\Omega$. Hence $N(t)$ is bounded as $t$ \textrightarrow  $\infty$, which means that $\Omega$ is attracting.
\end{proof}
\begin{rem}In finite dimension  every maximal solution of a Cauchy problem is global in a compact set. Then, every maximal solution of the model system (\ref{deterministic}) is global.
\end{rem}
Therefore the solutions of our model are considered epidemiologically and mathematically well posed in $\Omega$ .
	
}

\subsection{Disease-free equilibrium and its stability}
\label{subsec: disease-free}
{\allowdisplaybreaks
For the analysis of the spread of an infection, we define the disease-free equilibrium (DFE) which is a state in the population without any infection. The disease-free equilibrium is deduced from the resolution of the system of equations in (\ref{deterministic}) by taking $I_1 = 0$ and $I_2 = 0$. Thus, the disease-free equilibrium for model (\ref{deterministic}) satisfies the following system of equations:
\begin{equation}\label{DFE}
\left\{ \begin{array}{ll}
			\theta(1-p) -\mu S_1^{0}=0,\\
			\theta p -\mu S_2^{0}=0.
\end{array}\right.
\end{equation}
Solving the system of equations in (\ref{DFE}) yields the disease-free equilibrium point:
$$Q^{0}=(S_1^{0},S_2^{0},0,0,0),$$where  $S_1^{0}=\dfrac{\theta(1-p)}{\mu}$, $S_2^{0}=\dfrac{\theta p}{\mu}$ and $N^{0}=S_1^{0}+S_2^{0}=\dfrac{\theta}{\mu}$.
	
The linear stability of $Q^{0}$ depends on the well known reproduction number $\mathcal{R}_0$, which is defined as the average number of secondary cases caused by an infected individual during its infectivity period when it is introduced into a population of susceptible individuals. We study the stability of the equilibrium through the next generation operator \cite{jacquez1993qualitative,van2002reproduction}. Recalling the notations in \cite{van2002reproduction} for model (\ref{deterministic}), the matrices $\mathcal{F}$ of the new infection and $\mathcal{V}$ of the remaining transfer terms at the DFE for are given by 
$$\mathcal{F} = \begin{bmatrix} \beta_{11}\dfrac{S_1I_1}{N}+\beta_{12}\dfrac{S_1I_2}{N}\\\\ \beta_{21}\dfrac{S_2I_1}{N}+\beta_{22}\dfrac{S_2I_2}{N}\end{bmatrix} \hspace{.2cm}\text{and}\hspace{.2cm}\mathcal{V} =
\begin{bmatrix}(\mu + \gamma_1 +d_1)I_1\\\\ (\mu + \gamma_2 +d_2)I_2\end{bmatrix}.$$
The Jacobian matrices of $\mathcal{F}$ and $\mathcal{V}$ at $Q^{0}$ are respectively,
\begin{equation}\label{FV}
F = \begin{bmatrix} \beta_{11}\dfrac{S_1^{0}}{N^{0}}&\beta_{12}\dfrac{S_1^{0}}{N^{0}}\\\\ \beta_{21}\dfrac{S_2^{0}}{N^{0}}&\beta_{22}\dfrac{S_2^{0}}{N^{0}}\end{bmatrix} \hspace{.2cm}\text{and}\hspace{.2cm}V =
\begin{bmatrix}\mu + \gamma_1 +d_1& 0\\\\ 0& \mu + \gamma_2 +d_2\end{bmatrix}.
\end{equation}
Then, $$FV^{-1}= \begin{bmatrix} \dfrac{\beta_{11}S_1^{0}}{N^{0}(\mu + \gamma_1 +d_1)}&\dfrac{\beta_{12}S_1^{0}}{N^{0}(\mu + \gamma_2 +d_2)}\\\\ \dfrac{\beta_{21}S_2^{0}}{N^{0}(\mu + \gamma_1 +d_1)}&\dfrac{\beta_{22}S_2^{0}}{N^{0}(\mu + \gamma_2 +d_2)}\end{bmatrix},$$ 
and the reproduction number of model system (\ref{deterministic}) is
\begin{align}
\mathcal{R}_0=\rho(FV^{-1})=&\dfrac{1}{2}\Big[ \dfrac{S_1^{0}}{N^{0}}\mathcal{R}_{0,11}+\dfrac{S_2^{0}}{N^{0}}\mathcal{R}_{0,22}+\sqrt{\Big(\dfrac{S_1^{0}}{N^{0}}\mathcal{R}_{0,11}- \dfrac{S_2^{0}}{N^{0}}\mathcal{R}_{0,22}\Big)^{2}+4\dfrac{S_1^{0}}{N^{0}}\dfrac{S_2^{0}}{N^{0}}\mathcal{R}_{0,12}\mathcal{R}_{0,21}}\Big],\nonumber\\
\mathcal{R}_0=&\dfrac{1}{2}\Big[ (1-p)\mathcal{R}_{0,11}+p\mathcal{R}_{0,22}+\sqrt{\Big((1-p)\mathcal{R}_{0,11}- p\mathcal{R}_{0,22}\Big)^{2}+4p(1-p)\mathcal{R}_{0,12}\mathcal{R}_{0,21}}\Big],\label{R_0}
\end{align}
where $\dfrac{S_1^{0}}{N^{0}}=1-p$ (respectively $\dfrac{S_2^{0}}{N^{0}}=p$) is the proportion of susceptible individuals that have not been vaccinated (respectively have been vaccinated) at the DFE $Q^{0}$. Similarly, we define $\mathcal{R}_{0,11}=\dfrac{\beta_{11}}{\mu + \gamma_1 +d_1}$ as the average number of secondary cases generated by an unvaccinated infected individual during its infectious period through the interaction with the unvaccinated population. Also $\mathcal{R}_{0,12}=\dfrac{\beta_{12}}{\mu + \gamma_1 +d_1}$ represents the average number of secondary cases generated by a vaccinated infected in the unvaccinated part of the population, $\mathcal{R}_{0,21}=\dfrac{\beta_{21}}{\mu + \gamma_2 +d_2}$ is the average number of secondary cases generated by an unvaccinated infected in the vaccinated part of the population, and $\mathcal{R}_{0,22}=\dfrac{\beta_{22}}{\mu + \gamma_2 +d_2}$ represents the average number of secondary cases generated by \textcolor{blue}{a} vaccinated infected in the vaccinated part of the population. Further, $\rho(FV^{-1})$ is the spectral radius of $FV^{-1}$. 

\begin{rem}\label{R}From the expression of the reproduction number $\mathcal{R}_0$ in (\ref{R_0}), we deduce that \\$\mathcal{R}_0\geq(1-p)\mathcal{R}_{0,11}\vee p\mathcal{R}_{0,22}$. Moreover using (\ref{R_0}) for $p=0$ (all new hosts are not vaccinated), $\mathcal{R}_0=\mathcal{R}_{0,11}$. Further if $p=1$ (all new hosts are vaccinated), then $\mathcal{R}_0=\mathcal{R}_{0,22}$.
\end{rem}

The importance of the reproduction number is due to the result given in the next lemma derived from Theorem 2 in \cite{van2002reproduction}.
\begin{lem}\label{Re}
The DFE $Q^{0}$ of the system (\ref{deterministic}) is locally asymptotically stable whenever $\mathcal{R}_0< 1$ and unstable whenever $\mathcal{R}_0>1$.
\end{lem}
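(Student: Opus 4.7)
The proof is essentially a direct application of Theorem 2 in \cite{van2002reproduction} to the model (\ref{deterministic}). The plan is to verify the standing hypotheses (A1)--(A5) of that theorem and then read off the conclusion from the spectral radius $\mathcal{R}_0=\rho(FV^{-1})$ already computed in (\ref{R_0}).

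First I would reorder the state vector so that the infected compartments $(I_1, I_2)$ precede the uninfected ones $(S_1, S_2, R)$, and write the right-hand side of (\ref{deterministic}) as $\mathcal{F}-\mathcal{V}$, where $\mathcal{F}$ collects only the rates of appearance of new infections (leading to the matrix $F$ in (\ref{FV})) and $\mathcal{V}=\mathcal{V}^{-}-\mathcal{V}^{+}$ collects all the remaining transfer terms. The disease-free subspace is $X_s=\{(S_1,S_2,I_1,I_2,R)\in\mathbb{R}^{5}_{+}:I_1=I_2=0\}$, and it clearly contains $Q^{0}$.

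Next I would verify the five conditions of \cite{van2002reproduction}. Conditions (A1) and (A2) follow from the positivity result (Theorem \ref{positivity}) together with the structure of the flow terms: every incoming and outgoing rate is a non-negative product of state variables and non-negative parameters, and if an infected compartment is empty then its removal term $\mathcal{V}^{-}_i$ vanishes. Condition (A3) holds because new infections appear only in the $I_1$ and $I_2$ compartments, so $\mathcal{F}_i=0$ for the indices corresponding to $S_1,S_2,R$. Condition (A4) follows because on $X_s$ both forces of infection $\lambda_1$ and $\lambda_2$ vanish, so $\mathcal{F}\equiv 0$, and there is no transfer $\mathcal{V}^{+}$ into the infected compartments from the outside. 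For (A5) I would compute the Jacobian of the disease-free subsystem $(S_1,S_2,R)$ restricted to $X_s$ at $Q^{0}$, which is the diagonal matrix $\operatorname{diag}(-\mu,-\mu,-\mu)$; all its eigenvalues equal $-\mu<0$, so (A5) holds.

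Having verified (A1)--(A5), Theorem 2 of \cite{van2002reproduction} applies directly and yields local asymptotic stability of $Q^{0}$ when $\mathcal{R}_0=\rho(FV^{-1})<1$ and instability when $\mathcal{R}_0>1$. There is no serious obstacle: the only item that really needs a computation is (A5), and that computation is trivial thanks to the decoupled linear structure of the uninfected subsystem at the DFE, so the whole argument reduces to checking routine sign conditions before citing the theorem.
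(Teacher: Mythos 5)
Your proposal is correct and follows essentially the same route as the paper, which states Lemma \ref{Re} as a direct consequence of Theorem 2 in \cite{van2002reproduction} without writing out a separate proof. You simply make explicit the verification of hypotheses (A1)--(A5) that the paper leaves implicit (your check of (A5) via the uninfected subsystem $\operatorname{diag}(-\mu,-\mu,-\mu)$, combined with the negative eigenvalues of $-V$, is exactly what is needed), so there is nothing to object to.
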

The biological meaning of Lemma \ref{Re} is that a sufficiently small number of infected hosts does not induce an epidemic unless the reproduction number $\mathcal{R}_0$, is greater than unity. Global asymptotic stability (GAS) of the DFE is required to better control the disease. In addition, the expansion of the basin of attraction of $Q^{0}$ is a more challenging task for the model under consideration, involving a fairly new result. For this purpose, we use Theorems 2.1 and 2.2 \textcolor{blue}{in} \cite{shuai2013global}.
	
\begin{thm}\label{DFEgas}
If $\mathcal{R}_0\leqslant1$, the DFE $Q^{0}$ of the system (\ref{deterministic}) is GAS in $\Omega$. If $\mathcal{R}_0>1$, $Q^{0}$ is unstable, the system (\ref{deterministic}) is uniformly persistent and there exists at least one endemic equilibrium in the interior of $\Omega$.
\end{thm}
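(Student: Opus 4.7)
The plan is to split the theorem into its two parts and handle each separately. For the GAS claim when $\mathcal{R}_0\leqslant 1$, I would follow the matrix-theoretic Lyapunov construction of \cite{shuai2013global}. Writing $x=(I_1,I_2)^{T}$ and $y=(S_1,S_2,R)^{T}$, the infected subsystem of (\ref{deterministic}) can be recast as $x'=(F-V)x-g(x,y)$ with $F,V$ as in (\ref{FV}) and
\[ g(x,y)=\begin{bmatrix}\bigl(\tfrac{S_1^{0}}{N^{0}}-\tfrac{S_1}{N}\bigr)(\beta_{11}I_1+\beta_{12}I_2)\\[4pt] \bigl(\tfrac{S_2^{0}}{N^{0}}-\tfrac{S_2}{N}\bigr)(\beta_{21}I_1+\beta_{22}I_2)\end{bmatrix}, \]
which isolates the gap between the DFE linearization and the genuine frequency-dependent incidence.

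Since $V^{-1}F$ is nonnegative and, assuming positive cross-transmission, irreducible, the Perron--Frobenius theorem yields a strictly positive left eigenvector $\omega^{T}$ satisfying $\omega^{T}V^{-1}F=\mathcal{R}_0\,\omega^{T}$. Taking $L=\omega^{T}V^{-1}x$ as Lyapunov candidate and differentiating along trajectories gives
\[ \dot L=(\mathcal{R}_0-1)\,\omega^{T}x-\omega^{T}V^{-1}g(x,y). \]
Once componentwise nonnegativity of $g$ is established on $\Omega$, the bound $\dot L\leqslant 0$ is immediate for $\mathcal{R}_0\leqslant 1$, with equality forcing $I_1=I_2=0$ via the positivity of $\omega$ and the irreducibility of $V^{-1}F$. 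On the invariant face $\{I_1=I_2=0\}$, the linear $S_1,S_2,R$ equations drive the solution to $(S_1^{0},S_2^{0},0)$, so LaSalle's invariance principle delivers global convergence to $Q^{0}$ inside $\Omega$.

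For the case $\mathcal{R}_0>1$, instability of $Q^{0}$ is already supplied by Lemma \ref{Re}. I would then invoke Theorem 2.2 of \cite{shuai2013global}, or equivalently standard persistence machinery (e.g. Thieme's acyclicity theorem applied to the compact absorbing set $\Omega$), using that the boundary face $\{I_1=I_2=0\}$ is invariant with $Q^{0}$ as its only $\omega$-limit point, and that $Q^{0}$ acts as a uniform weak repeller whenever the dominant eigenvalue of $F-V$ is positive. Uniform persistence together with a Brouwer fixed-point argument applied to a Poincaré-type return map on the interior of $\Omega$ then yields at least one endemic equilibrium.

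The main technical obstacle is verifying $g\geqslant 0$ componentwise, i.e.\ $S_i/N\leqslant S_i^{0}/N^{0}$ throughout $\Omega$. This is delicate for frequency-dependent incidence because disease-induced mortality can pull $N$ below $N^{0}=\theta/\mu$ faster than it depletes $S_i$, so the ratio $S_i/N$ need not be majorised by its DFE value a priori, despite $S_i\leqslant S_i^{0}$ following from the $S_i$-equations. I would circumvent this by either (i) deriving a differential inequality for $S_i/N$ that makes $\{S_i/N\leqslant S_i^{0}/N^{0}\}$ forward invariant and attracting, so LaSalle applies on a slightly smaller absorbing set, or (ii) augmenting $L$ by a term penalising the deviation of $N$ from $N^{0}$, chosen so that the resulting composite Lyapunov function has a controllable sign even when the ratios temporarily overshoot.
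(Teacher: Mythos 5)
Your route coincides with the paper's: the same splitting into infected coordinates $x=(I_1,I_2)^{T}$ and uninfected coordinates, the same perturbation term (the paper calls your $g$ its $f$), the same Lyapunov function $L=\omega^{T}V^{-1}x$ built from the left eigenvector of $V^{-1}F$ associated with $\mathcal{R}_0$, the same identity $\dot L=(\mathcal{R}_0-1)\omega^{T}x-\omega^{T}V^{-1}g(x,y)$ followed by LaSalle, and the same Shuai--van den Driessche persistence argument plus fixed-point existence for $\mathcal{R}_0>1$. So there is no methodological divergence to report.

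The substantive issue is the step you flag and leave open: componentwise nonnegativity of $g$ on $\Omega$, i.e.\ $S_i/N\leqslant S_i^{0}/N^{0}$. As written, your proposal does not close this, so the $\mathcal{R}_0\leqslant 1$ half is not yet a complete proof; you offer two repairs but execute neither. You are right that the inequality is not automatic: $\Omega$ contains states with $S_1=N$ (no vaccinated or infected hosts present except an infinitesimal seed), where $S_1/N=1>1-p$, so $g$ genuinely takes negative values on $\Omega$. Notably, the paper's own proof simply asserts $f(a,b)\geqslant 0$ in $\Omega$ without justification, so the gap you identify is present there as well. Your repair (i) is the more promising and can be made concrete: on the surface $S_1/N=1-p$ one computes
\[
\frac{d}{dt}\Bigl(\frac{S_1}{N}\Bigr)=\frac{1-p}{N}\bigl[(d_1-\beta_{11})I_1+(d_2-\beta_{12})I_2\bigr],
\]
which is $\leqslant 0$ whenever $\beta_{11}\geqslant d_1$ and $\beta_{12}\geqslant d_2$; an analogous computation on $S_2/N=p$ requires $\beta_{21}\geqslant d_1$ and $\beta_{22}\geqslant d_2$. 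Under these mild conditions the region $\{S_1/N\leqslant 1-p,\ S_2/N\leqslant p\}$ is forward invariant and the LaSalle argument runs there, but extending global asymptotic stability to all of $\Omega$ still requires showing that every trajectory eventually enters this region (or restricting the stated domain of attraction). Until one of your two repairs is carried out, the first assertion of the theorem remains unproved; the $\mathcal{R}_0>1$ part of your argument is fine.
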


\begin{proof}
See \ref{GAS_DFE}.
\end{proof}

As a consequence of the meaning of Theorem \ref{DFEgas} and Remark \ref{R}, we can confidently deduce that the disease can be eradicated from the host community if the value of $\mathcal{R}_0 $ can be reduced to less than the unity, independently of whether individuals introduced in the population are all vaccinated or not.
}

\subsection{Endemic equilibrium and its stability}
\label{subsec: Endemicequilibrium}
{\allowdisplaybreaks
Let $Q^{*}=(S_1^{*}, S_2^{*},I_1^{*},I_2^{*},R^{*})$ be the positive endemic equilibrium (EE) of model system (\ref{deterministic}). Then, the positive endemic equilibrium can be obtained by setting the right hand side of all equations in model system (\ref{deterministic}) to zero, giving:
\begin{equation}\label{EE}
\left\{\begin{array}{llll}
\theta(1-p)-\beta_{11}\dfrac{S_1^{*}I_1^{*}}{N^{*}}-\beta_{12}\dfrac{S_1^{*}I_2^{*}}{N^{*}}-\mu S_1^{*}  = 0,\\\\
\theta p -\beta_{21}\dfrac{S_2^{*}I_1^{*}}{N^{*}}-\beta_{22}\dfrac{S_2^{*}I_2^{*}}{N^{*}}-\mu S_2^{*}  = 0, \\\\ 
\beta_{11}\dfrac{S_1^{*}I_1^{*}}{N^{*}}+\beta_{12}\dfrac{S_1^{*}I_2^{*}}{N^{*}} -(\mu + \gamma_1 +d_1)I_1^{*} = 0, \\\\
\beta_{21}\dfrac{S_2^{*}I_1^{*}}{N^{*}}+\beta_{22}\dfrac{S_2^{*}I_2^{*}}{N^{*}} -(\mu + \gamma_2 +d_2)I_2^{*} = 0, \\\\
\gamma_1I_1^{*}+\gamma_2I_2^{*}-\mu R^{*}  = 0.
\end{array}\right.
\end{equation}
Given the complexity of the system (\ref{EE}), we are not determining an explicit formula for the endemic equilibrium point $Q^{*}$. Note that determining $Q^{*}$ is often very difficult to be carried out when the system is complex and its size is large. However, to prove the existence of $Q^{*}$, we can rewrite the system (\ref{EE}) as a fixed point problem and use Theorem 2.1 in \cite{hethcote1985stability}. To do this, we solve the system (\ref{EE}). After algebraic manipulations of this system, we obtain:
	
$R^{*}=\dfrac{\gamma_1I_1^{*}+\gamma_2I_2^{*}}{\mu}$,  $S_1^{*}=\dfrac{\theta(1-p)N^{*}}{\beta_{11}I_1^{*}+\beta_{12}I_2^{*}+\mu N^{*} }$,  \quad$S_2^{*}=\dfrac{\theta pN^{*}}{\beta_{21}I_1^{*}+\beta_{22}I_2^{*} - d_1I_1^{*} - d_2 I_2^{*}+\theta }$, 
	
$I_1^{*}=\dfrac{\theta(1-p)(\beta_{11}I_1^{*}+\beta_{12}I_2^{*})}{(\mu + \gamma_1 +d_1)(\beta_{11}I_1^{*}+\beta_{12}I_2^{*}- d_1I_1^{*} - d_2 I_2^{*}+\theta  )} =H_1(I^{*})$  and  
	
$I_2^{*}=\dfrac{\theta p(\beta_{21}I_1^{*}+\beta_{22}I_2^{*})}{(\mu + \gamma_2 +d_2)(\beta_{21}I_1^{*}+\beta_{22}I_2^{*}- d_1I_1^{*} - d_2 I_2^{*}+\theta )}=H_2(I^{*})$ with $I^{*}=(I_1^{*},I_2^{*}).$
	
Then, the endemic equilibrium are the fixed points of $H$ given by $I=H(I)$ where $I=(I_1,I_2)$. By definition, $H$ is continuous, monotonously non decreasing and strictly sublinear. $H$ is also a bounded function which maps the non negative orthant $\Omega$ into itself. Morever, $H(0)=0$ by definition and the jacobian of $H$ at the zero, $H^{'}(0)$, exists and is irreducible since $$H^{'}(0)=\begin{bmatrix} \beta_{11}a_1& \beta_{12}a_1\\\\\beta_{21}a_2& \beta_{22}a_2\end{bmatrix}= FV^{-1}, $$ where $a_1=\dfrac{1-p}{\mu + \gamma_1 +d_1}$ and $a_2=\dfrac{p}{\mu + \gamma_2 +d_2}.$

We deduce that the spectral radius $\rho(H^{'}(0))$ of the matrix $H^{'}(0)$ is $\mathcal{R}_0.$  Then, the existence and the uniqueness of a non-negative fixed point occurs if and only if $\mathcal{R}_0>1$.
\begin{prop}
The system (\ref{deterministic}) has only one endemic equilibrium whenever $\mathcal{R}_0>1$.
\end{prop}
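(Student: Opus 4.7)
The plan is to apply Theorem 2.1 of Hethcote--Thieme~\cite{hethcote1985stability}, which gives a uniqueness (and existence) criterion for a positive fixed point of a map of the non-negative orthant in terms of the spectral radius of its derivative at the origin. All of the pre-processing has already been performed in the excerpt: the endemic equilibrium components $S_1^*,S_2^*,R^*$ are expressed as explicit functions of $(I_1^*,I_2^*)$, and the existence of an endemic equilibrium is reduced to the fixed-point problem $I=H(I)$, with $H=(H_1,H_2)$ mapping the compact non-negative set $\Omega_I:=\{I\in\mathbb{R}^2_{+}:I_1+I_2\leqslant \theta/\mu\}$ (coming from the invariant region $\Omega$) into itself. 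It therefore suffices to check the hypotheses of the cited theorem one by one.

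First I would verify continuity and $H(0)=0$, both of which are immediate from the explicit formulas for $H_1,H_2$. Next I would verify monotonicity by computing $\partial H_i/\partial I_j$ and checking that the numerator/denominator structure makes each partial derivative non-negative on $\Omega_I$. For strict sublinearity, i.e.\ $H(\lambda I)>\lambda H(I)$ for every $\lambda\in(0,1)$ and every $I>0$, I would substitute $\lambda I$ in place of $I$ in $H_i$; the numerators scale by $\lambda$ while the denominators contract strictly less than proportionally because of the constant term $\theta$, which gives the strict inequality component-wise. Boundedness of $H$ on $\Omega_I$ follows from the invariance of $\Omega$ established in the previous lemma. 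I would then compute the Jacobian
\[
H'(0)=\begin{bmatrix}\beta_{11}a_1 & \beta_{12}a_1 \\ \beta_{21}a_2 & \beta_{22}a_2\end{bmatrix}=FV^{-1},
\]
which is irreducible because the off-diagonal entries $\beta_{12}a_1$ and $\beta_{21}a_2$ are strictly positive under the standing modelling assumptions (cross-transmission between the two sub-populations is non-trivial).

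The conclusion is then automatic: Theorem 2.1 of~\cite{hethcote1985stability} states that under these hypotheses the map $H$ has a (unique) non-trivial fixed point in the positive orthant if and only if $\rho(H'(0))>1$. Since $\rho(H'(0))=\rho(FV^{-1})=\mathcal{R}_0$, the condition $\mathcal{R}_0>1$ yields exactly one positive solution $(I_1^*,I_2^*)$ of $I=H(I)$, and the remaining components $S_1^*,S_2^*,R^*$ are then uniquely determined by the closed-form expressions above, giving a unique endemic equilibrium $Q^*$.

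The step I expect to be the main obstacle is verifying strict sublinearity cleanly: the denominators of $H_1,H_2$ contain the combination $\beta_{11}I_1+\beta_{12}I_2-d_1I_1-d_2I_2+\theta$ (and analogously for $H_2$), so I would need to keep track of signs and use the fact that $\theta>0$ to guarantee a strict — not merely non-strict — inequality. A minor side-check is that the attracting region $\Omega_I$ is indeed positively invariant under $H$, which I would argue by combining the invariance of $\Omega$ from the preceding lemma with the explicit bounds on $N^*$ obtained from the system (\ref{EE}).
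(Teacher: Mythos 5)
Your proposal follows essentially the same route as the paper: the paper's argument is precisely the reduction to the fixed-point problem $I=H(I)$ and an appeal to Theorem~2.1 of Hethcote--Thieme, with the hypotheses (continuity, $H(0)=0$, monotonicity, strict sublinearity, boundedness, irreducibility of $H'(0)=FV^{-1}$, and $\rho(H'(0))=\mathcal{R}_0$) asserted rather than checked in detail. Your plan simply fleshes out those verifications, so it is correct and matches the paper's approach.
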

We establish the following result to analyze the stability of $Q^{*}$.
	
\begin{thm}\label{EEgas}
If $\mathcal{R}_0>1$, the endemic equilibrium $Q^{*}$ is GAS in $\Omega$.
\end{thm}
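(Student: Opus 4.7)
The plan is to prove global asymptotic stability of $Q^*$ by constructing a Lyapunov function of Goh--Volterra type and applying LaSalle's invariance principle, following the matrix-theoretic / graph-theoretic recipe of Shuai and van den Driessche already invoked for the DFE. A preliminary observation is that, by Theorem \ref{DFEgas}, the assumption $\mathcal{R}_0 > 1$ guarantees uniform persistence, so every solution initiated in the interior of $\Omega$ is eventually bounded below componentwise by some $\eta > 0$; this legitimises the use of logarithmic functionals.

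I would take as Lyapunov candidate
\[
V(S_1,S_2,I_1,I_2,R) = \sum_{i=1}^{2} a_i\,\phi\!\left(\tfrac{S_i}{S_i^{*}}\right) + \sum_{j=1}^{2} b_j\,\phi\!\left(\tfrac{I_j}{I_j^{*}}\right) + c\,\phi\!\left(\tfrac{R}{R^{*}}\right),
\]
where $\phi(x) = x - 1 - \ln x \geq 0$ with equality iff $x=1$, and $a_i, b_j, c > 0$ are weights to be determined. Differentiating along solutions of (\ref{deterministic}) and substituting the inflows $\theta(1-p), \theta p$ and the loss coefficients $\mu + \gamma_j + d_j$ by the expressions given at $Q^{*}$ in (\ref{EE}), all the constant terms cancel, leaving $\dot V$ as a combination of the ratios $u_i = S_i/S_i^{*}$, $v_j = I_j/I_j^{*}$, $r = R/R^{*}$ and $w = N/N^{*}$.

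The core technical step is to treat the frequency-dependent bilinear terms via the identity
\[
\frac{S_i I_j}{N} = \frac{S_i^{*} I_j^{*}}{N^{*}} \cdot \frac{u_i\, v_j}{w},
\]
so that every cross-term in $\dot V$ carries a weight $\omega_{ij} = \beta_{ij} S_i^{*} I_j^{*}/N^{*}$. The weights $a_i, b_j$ are then fixed to balance these $\omega_{ij}$-terms according to the structure of the next-generation matrix at $Q^{*}$: concretely, I would choose $(b_1,b_2)$ proportional to the left Perron eigenvector of $FV^{-1}$ evaluated at $Q^{*}$ (this is exactly the prescription of Theorem~2.2 of \cite{shuai2013global}), and $a_i$ to absorb the $S_i$-equations. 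With these choices the cross-terms regroup into non-positive combinations of $\phi$-values via the arithmetic--geometric mean inequality $\phi \geq 0$, while the $S_i$-contributions produce the familiar negative term $-\mu \sum_i a_i (S_i - S_i^{*})^2/S_i$.

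The principal obstacle is the non-constancy of $N(t)$, which breaks the clean cancellations available in constant-population multi-group models and leaves a residual $N$-dependent term. I would resolve this by retaining the $R$-component in $V$ (so that $N = S_1 + S_2 + I_1 + I_2 + R$ is fully tracked) and by exploiting the aggregate equilibrium identity $\theta = \mu N^{*} + d_1 I_1^{*} + d_2 I_2^{*}$ obtained by summing the equations in (\ref{EE}); the residual then collapses to an expression proportional to $-(N - N^{*})^{2}/N$ plus extra $\phi$-combinations, all non-positive. Once $\dot V \leq 0$ on $\Omega$ is secured, with equality only when $u_i = v_j = r = w = 1$, LaSalle's invariance principle identifies the largest invariant subset of $\{\dot V = 0\}$ as the single point $\{Q^{*}\}$ (by back-substitution into (\ref{deterministic})), whence $Q^{*}$ is GAS in $\Omega$.
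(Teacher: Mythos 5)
Your overall architecture --- a Goh--Volterra functional $\sum_i a_i\,\phi(S_i/S_i^{*})+\sum_j b_j\,\phi(I_j/I_j^{*})+c\,\phi(R/R^{*})$ plus LaSalle --- is the same one the paper uses (the paper takes all weights equal to one and omits the $R$-term), so the strategy is not in question. The genuine gap is the step you yourself flag as ``the principal obstacle'' and then dispose of in a single sentence: the assertion that, after invoking $\theta=\mu N^{*}+d_1I_1^{*}+d_2I_2^{*}$, the $N$-dependent residual ``collapses to an expression proportional to $-(N-N^{*})^{2}/N$ plus extra $\phi$-combinations, all non-positive.'' Nothing in the stated ingredients produces this. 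Writing $S_iI_j/N=(S_i^{*}I_j^{*}/N^{*})\,u_iv_j/w$ with $w=N/N^{*}$, the $S_i$-blocks of $\dot V$ contribute terms in $g(v_j/w)$ and the $I_j$-blocks contribute terms in $-g(v_j)$, where $g(z)=z-\log z$; after the telescoping AM--GM cancellations these survive with \emph{different} positive coefficients (the column sum $\sum_i\beta_{ij}S_i^{*}I_j^{*}/N^{*}$ against the row sum $\sum_k\beta_{jk}S_j^{*}I_k^{*}/N^{*}$, which agree only in aggregate over $j$), and their combination is not sign-definite because $w$ moves independently of the $v_j$ through $R$ and the disease-induced deaths $d_1,d_2$. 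This is precisely the known hard configuration (standard incidence, varying total population, $d_j>0$). Neither of your two remedies supplies the missing cancellation: the left Perron eigenvector of $FV^{-1}$ is the prescription for the \emph{DFE} functional (Theorem 2.1 of Shuai--van den Driessche; the endemic case uses their graph-theoretic Theorem 3.1, with weights obtained from the matrix-tree theorem applied to the weight matrix $(\omega_{ij})$), and adding $c\,\phi(R/R^{*})$ introduces its own cross terms $\gamma_jI_j(1-R^{*}/R)$ that must in turn be dominated.

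For comparison, the paper's own proof stalls at exactly the same place: its final display bounds $L'$ by $\sum_j c_j\,g(I_jN^{*}/(I_j^{*}N))-\sum_j d_j\,g(I_j/I_j^{*})$ with $c_j\neq d_j$ in general, and then asserts non-positivity from ``$-z+\log z\leqslant-1$,'' which does not yield the claim. So your proposal reproduces the paper's plan, including its unresolved step, while being somewhat more candid about where the difficulty sits. A complete argument would need either an explicit, verified estimate controlling the $w$-dependent terms (e.g.\ a term in the functional whose derivative generates $-\kappa\,\phi(w)$ with $\kappa$ large enough to absorb them), or a restriction such as $d_1=d_2=0$, under which $N\to\theta/\mu$ and one can pass to the limiting system with constant $N$ and argue via chain transitivity. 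As written, the proposal does not close.
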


\begin{proof}
	See \ref{GAS_EE}.
\end{proof}

The epidemiological consequence of this theorem is that the disease persists as endemic in the host population as soon as $\mathcal{R}_0 > 1$.
	
}
\subsection{Herd immunity threshold}
\label{subsec:Herdimmunitythreshold}
{\allowdisplaybreaks

Herd immunity  is a form of indirect protection from infectious disease that can occur with some diseases when a sufficient percentage of a population has become immune to an infection, whether through previous infections or vaccination, and thereby reducing the likelihood of infection for individuals who lack immunity. This is due to the fact that immune individuals are unlikely to contribute to disease transmission, disrupting chains of infection, which stops or slows the spread of disease. To compute the herd immunity threshold associated with the model (\ref{deterministic}), we set the reproduction number, $\mathcal{R}_0$ to one and
solve for $p=\dfrac{S_2^{0}}{N^{0}}$ which is the proportion of susceptible individuals which have been vaccinated at the DFE, $Q^{0}.$ Then we have,
\begin{align*}
\mathcal{R}_0=1 \Longleftrightarrow& \big[2-\mathcal{R}_{0,11}+(\mathcal{R}_{0,11}-\mathcal{R}_{0,22})p\big]^{2}=\big[\mathcal{R}_{0,11}-(\mathcal{R}_{0,11}+\mathcal{R}_{0,11})p\big]^{2}+ 4p(1-p)\mathcal{R}_{0,12}\mathcal{R}_{0,21} \nonumber\\
\Longleftrightarrow& \big[(\mathcal{R}_{0,11}-\mathcal{R}_{0,22})^{2}-(\mathcal{R}_{0,11}+\mathcal{R}_{0,22})^{2}+4\mathcal{R}_{0,12}\mathcal{R}_{0,21}\big]p^{2}+\big[2(2-\mathcal{R}_{0,11})(\mathcal{R}_{0,11}-\mathcal{R}_{0,22}) \nonumber\\
&+2\mathcal{R}_{0,11}(\mathcal{R}_{0,11}+\mathcal{R}_{0,22})-4\mathcal{R}_{0,12}\mathcal{R}_{0,21}\big]p+(2-\mathcal{R}_{0,11})^{2}-\mathcal{R}_{0,11}^{2}=0. 
\end{align*}
Thus solving $\mathcal{R}_0=1$ is equivalent to finding the roots of polynomial $Q(p)$ given by:
\begin{equation}\label{proportion}
	Q(p)=Ap^{2}+Bp+C,
\end{equation}
where $A=4\mathcal{R}_{0,12}\mathcal{R}_{0,21}-4\mathcal{R}_{0,11}\mathcal{R}_{0,22}$, $B=4\mathcal{R}_{0,11}(1+\mathcal{R}_{0,22})-4(\mathcal{R}_{0,22}+\mathcal{R}_{0,12}\mathcal{R}_{0,21})$ and\\ $C=4(1-\mathcal{R}_{0,11})$.

Noting that negative thresholds are biologically meaningless, the conditions for $Q(p)$ to have positive real roots are determined
below. For this purpose, we now perform a case analysis to determine the positive real zeros of $Q$.

Let $\Delta=B^{2}-4AC$ be the discriminant of the equation $Q(p)=0$.
\begin{description}
\item[Case 1] Suppose $A=0$. Then $$p_{c}=-\frac{C}{B}$$
is the only real root of $Q$. In addition $p_{c}>0$ if and only if $B$ and $C$ have opposite signs and $B\neq0.$
	
\item[Case 2] Suppose $A\neq0$ and $\Delta= 0$. Then
$$p_{c_{0}}=-\dfrac{B}{2A}$$
is the only real root of $Q$. Further $p_{c_0}>0$ if and only if $A$ and $B$ have opposite signs.

\item[Case 3] Suppose $A\neq0$ and $\Delta> 0$. Then
$$p_{c_{1}}=\dfrac{-B-\sqrt{\Delta}}{2A} \hspace{.2 cm} \text{and} \hspace{.2 cm}p_{c_{2}}=\dfrac{-B+\sqrt{\Delta}}{2A} $$
are the real roots of $Q$. 

Moreover, if $A>0$, then 
\begin{equation*}
	\left\{ \begin{array}{ll}
		p_{c_{1}}>0 \hspace{.2 cm} \text{if and only if} \hspace{.2 cm} \sqrt{\Delta}<-B,\\
		p_{c_{2}}>0 \hspace{.2 cm} \text{if and only if} \hspace{.2 cm} \sqrt{\Delta}>B.
	\end{array}\right.
\end{equation*}
Therefore, $Q$ has two positive real roots if $A>0$, $B<0$, $C>0$ and $\Delta> 0$. In addition, it has one positive real root if ($A>0$, $B<0$, $C<0$ and $\Delta> 0$) or ($A>0$, $B>0$ and $C<0$ and $\Delta> 0$).

Finally if $A<0$, then 
\begin{equation*}
	\left\{ \begin{array}{ll}
		p_{c_{1}}>0 \hspace{.2 cm} \text{if and only if} \hspace{.2 cm} \sqrt{\Delta}>-B,\\
		p_{c_{2}}>0 \hspace{.2 cm} \text{if and only if} \hspace{.2 cm} \sqrt{\Delta}<B.
	\end{array}\right.
\end{equation*}
Therefore, $Q$ has two positive real roots if $A< 0$, $B>0$, $C<0$ and $\Delta> 0$. \textcolor{blue}{It} has one positive real root if ($A< 0$, $B>0$, $C>0$ and $\Delta> 0$) or ($A< 0$, $B<0$, $C>0$ and $\Delta> 0$).

\end{description}

Theorem \ref{DFEgas} and Theorem \ref{EEgas} can be combined to give the following result:

\begin{cor} \label{cor}
An imperfect vaccine can lead to the elimination of the disease if $Q(p)>0$ (\textit{i.e.} $\mathcal{R}_0<1$). If $Q(p)<0$ (\textit{i.e.} $\mathcal{R}_0>1$), then the disease persists in the population.
\end{cor}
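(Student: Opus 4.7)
The plan is to piece together the equivalence between the sign of $Q(p)$ and whether $\mathcal{R}_0$ is less than, equal to, or greater than one, and then invoke the two global stability results proved above. Since Theorem \ref{DFEgas} and Theorem \ref{EEgas} already deliver the dynamical consequences directly from the value of $\mathcal{R}_0$, the only genuinely new work is the sign correspondence $Q(p)>0 \iff \mathcal{R}_0<1$.

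First I would revisit the derivation of $Q(p)$. The computation preceding the corollary shows that the equation $\mathcal{R}_0=1$ is equivalent (after isolating the square root and squaring) to $Q(p)=0$. Squaring is of course not an equivalence in general, so I would check it more carefully: starting from the definition of $\mathcal{R}_0$ in (\ref{R_0}), the identity
$$\sqrt{\bigl((1-p)\mathcal{R}_{0,11}-p\mathcal{R}_{0,22}\bigr)^{2}+4p(1-p)\mathcal{R}_{0,12}\mathcal{R}_{0,21}} \;=\; 2-(1-p)\mathcal{R}_{0,11}-p\mathcal{R}_{0,22}$$
has a non-negative left-hand side, so it forces the right-hand side to be non-negative as well, making the squaring step reversible on the locus $\mathcal{R}_0=1$. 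Hence $\mathcal{R}_0=1 \Longleftrightarrow Q(p)=0$.

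Second, I would upgrade this zero-set equivalence to a sign equivalence. Both $p\mapsto \mathcal{R}_0(p)=\rho(FV^{-1})$ and $p\mapsto Q(p)$ are continuous on $[0,1]$, so each of them can change sign only when crossing zero, which by the previous step occurs at the same values of $p$. Consequently, on every connected component of $\{p\in[0,1]:Q(p)\neq 0\}$ the signs of $Q(p)$ and of $1-\mathcal{R}_0(p)$ are both constant. To align them, I would use Remark \ref{R}: at $p=0$ one has $\mathcal{R}_0(0)=\mathcal{R}_{0,11}$, while from the definitions of $A,B,C$ one gets $Q(0)=C=4(1-\mathcal{R}_{0,11})$. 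Thus $Q(0)$ and $1-\mathcal{R}_0(0)$ have the same sign, and by the continuity argument this alignment propagates across the whole interval, giving $Q(p)>0 \iff \mathcal{R}_0<1$ and $Q(p)<0 \iff \mathcal{R}_0>1$.

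Third, the conclusion is immediate. If $Q(p)>0$ then $\mathcal{R}_0<1$, so by Theorem \ref{DFEgas} the disease-free equilibrium $Q^0$ is globally asymptotically stable in $\Omega$ and the disease is eliminated; if $Q(p)<0$ then $\mathcal{R}_0>1$, so by Theorem \ref{EEgas} the unique endemic equilibrium $Q^*$ is globally asymptotically stable in $\Omega$ and the disease persists. The step I expect to be the main obstacle is the squaring-reversibility issue: when $(1-p)\mathcal{R}_{0,11}+p\mathcal{R}_{0,22}>2$ the naive reversal is not valid, and one must fall back on the continuity-plus-endpoint argument rather than on direct algebra. A secondary subtlety is that, as the case analysis shows, $Q$ may have two positive roots in $[0,1]$, so the sign of $Q(p)$ alternates across them; the continuity argument still applies, but one must be careful to match each sign region with the corresponding regime of $\mathcal{R}_0$.
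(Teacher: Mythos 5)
Your overall strategy --- reduce the corollary to a sign correspondence between $Q(p)$ and $1-\mathcal{R}_0$, then quote Theorems \ref{DFEgas} and \ref{EEgas} --- is exactly the route the paper intends (the paper offers no proof beyond ``combine the two theorems'' and the algebra preceding \eqref{proportion}), and you have correctly put your finger on the one non-trivial point, namely the irreversibility of the squaring step. However, your proposed repair does not close that gap. The continuity-plus-endpoint argument requires the zero sets $\{p:Q(p)=0\}$ and $\{p:\mathcal{R}_0(p)=1\}$ to \emph{coincide}, whereas you only establish the inclusion $\mathcal{R}_0=1\Rightarrow Q(p)=0$. Squaring introduces spurious roots: writing $a=(1-p)\mathcal{R}_{0,11}$, $b=p\mathcal{R}_{0,22}$, $c=4p(1-p)\mathcal{R}_{0,12}\mathcal{R}_{0,21}$, one has the factorized form $Q(p)=4(1-a)(1-b)-c$, and $Q(p)=0$ also occurs on the branch $\sqrt{(a-b)^{2}+c}=a+b-2$, where $\mathcal{R}_0=a+b-1$ need not equal $1$. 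At such a point $Q$ changes sign while $1-\mathcal{R}_0$ does not, so your sign alignment flips rather than propagates past it. Concretely, if $a>1$, $b>1$ and $c<4(a-1)(b-1)$ (achievable within the constraints $\beta_{21}\le\beta_{11}$, $\beta_{22}\le\beta_{12}$), then $Q(p)>0$ while $\mathcal{R}_0\ge\max(a,b)>1$ by Remark \ref{R}; hence the implication $Q(p)>0\Rightarrow\mathcal{R}_0<1$ is false without a further hypothesis, and no continuity argument can rescue it.

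The clean way to finish is to use Remark \ref{R} directly instead of continuity: since $\mathcal{R}_0\ge\max(a,b)$, the condition $\mathcal{R}_0<1$ forces $a<1$ and $b<1$, so $2-a-b>0$ and the squaring step is reversible on that region; this yields the exact equivalence $\mathcal{R}_0<1\iff\bigl(a<1,\ b<1,\ Q(p)>0\bigr)$. The other half of the corollary is unconditionally sound and your treatment of it is fine: $\mathcal{R}_0<1\Rightarrow Q(p)>0$ gives $Q(p)<0\Rightarrow\mathcal{R}_0\ge1$, and $\mathcal{R}_0=1$ would force $Q(p)=0$, so $Q(p)<0\Rightarrow\mathcal{R}_0>1$ and persistence follows from Theorem \ref{EEgas}. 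In short, half of your argument survives; the elimination half needs the extra condition $\min\bigl((1-p)\mathcal{R}_{0,11},\,p\mathcal{R}_{0,22}\bigr)<1$ (or an equivalent restriction), a caveat that the paper itself also omits.
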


The implication of Corolloary \ref{cor} is that the use of an imperfect vaccine can lead to the elimination of the disease in the host population, if the proportion of individuals vaccinated satisfies one of these conditions:
\begin{enumerate}
\item $p>p_c$,  if $A=0$, $B>0$ and $C<0$;
\item $p\in [0,p_c[$, if $A=0$, $B>0$ and $C>0$;
\item $p\neq p_{c_{0}}$, if $A>0$, $\Delta=0$ and $B<0$;
\item $p\in [0,p_{c_{1}}[$ or $p>p_{c_{2}}$, if $A > 0$, $\Delta>0$, $B<0$ and $C>0$;
\item $p>p_{c_{1}}$ or  $p>p_{c_{2}}$, if ($A > 0$, $\Delta>0$, $B<0$ and $C<0$) or ($A > 0$, $\Delta>0$, $B>0$ and $C<0$);
\item $p\in ]p_{c_{2}},p_{c_{1}}[$, if $A < 0$, $\Delta>0$, $B>0$ and $C<0$;
\item $p\in [0,p_{c_{1}}[$ or $p\in [0,p_{c_{2}}[$, if ($A < 0$, $\Delta>0$, $B>0$ and $C>0$) or ($A < 0$, $\Delta>0$, $B<0$ and $C>0$).
\end{enumerate}

Conversely, the disease persists in the population if the proportion of individuals vaccinated satisfies one of these conditions:
\begin{enumerate}
	\item $p\in [0,p_c[$,  if $A=0$, $B>0$ and $C<0$;
	\item $p> p_c$, if $A=0$, $B>0$ and $C>0$;
	\item $p\neq p_{c_{0}}$, if $A<0$, $\Delta=0$ and $B>0$;
	\item $p\in ]p_{c_{1}},p_{c_{2}}[$, if $A > 0$, $\Delta>0$, $B<0$ and $C>0$;
	\item $p\in [0,p_{c_{1}}[$ or  $p\in [0,p_{c_{2}}[$, if ($A > 0$, $\Delta>0$, $B<0$ and $C<0$) or ($A > 0$, $\Delta>0$, $B>0$ and $C<0$);
	\item $p\in [0,p_{c_{2}}[$ or $p>p_{c_{1}}$, if $A < 0$, $\Delta>0$, $B>0$ and $C<0$;
	\item $p>p_{c_{1}}$ or $p>p_{c_{2}}$, if ($A < 0$, $\Delta>0$, $B>0$ and $C>0$) or ($A < 0$, $\Delta>0$, $B<0$ and $C>0$).
\end{enumerate}
To conclude on the analytical part, the eradication of a disease is conditioned by the proportion of vaccinated individuals, this vaccination coverage threshold is called the critical vaccination proportion ($p_c$). In some cases, there is one critical proportion which determines whether the basic reproduction number,$\mathcal{R}_0$, is less than one or not. In other cases, two critical proportions are found and which define three different dynamics: disease eradication when $\mathcal{R}_0<1$, endemic disease dynamics when $\mathcal{R}_0>1$ with presence or absence of epidemiological oscillations in the number of infected. In the latter case of two thresholds, the analytical results derive above do not allow the prediction of occurrence of the dynamics and the vaccination proportions. We therefore provide numerical simulations in the follow up section. 

}

\section{Numerical simulations}
We refine the above analytical results by numerical simulations to assess the influence of the various model parameters and the impact of population turnover and trade-offs in vaccination efficiency, on the epidemiological dynamics (\textit{i.e.} the number of infected individuals, and $\mathcal{R}_0$). To illustrate the behavior of model (\ref{deterministic}), we use parameter values for the mortality rates, $d_1$, $d_2$, and the recovery rates, $\gamma_{1}, \gamma_{2}$, measured for Covid-19 as an example of a highly transmissible disease (based on data from the United States \cite{Mancusowill2021}), and vary the values of other parameters as described in Table \ref{parameter}. 

\subsection{Global sensitivity analysis}
Uncertainty / sensitivity analyses are first used to determine which model input parameters have the greatest impact on the epidemiological outcome \cite{Marinometh2008}. The sensitivity analysis of the model parameters is carried out to measure the correlation between the model’s parameters (\ref{deterministic}) and 1) the total number of infected individuals ($I_1+I_2$), and 2) the threshold parameter $\mathcal{R}_0$. The analysis is performed by using the Latin Hypercube Sampling (LHS) technique and partial rank correlation coefficients (PRCCs) \cite{Marinometh2008}. In our analysis, 1,000 model simulations are performed by running the model for 200 time steps (equivalent to 200 days) and number of infected are recorded at time points 50, 100 and 200. To perform the sensitivity analysis, each parameter has a parameter range defines by the maximum (respectively the minimum) being $50\%$ greater (respectively less) than its baseline (values in Table \ref{sensibilité1},      
\ref{sensibilité2}, \ref{sensibilité3}, \ref{sensibilité4}). We then divide each parameter range into 1,000 equally large sub-intervals, and draw a value per parameter within that interval using a Uniform draw. By this mean we obtain a uniform distribution of 1,000 parameter values for each parameter. The parameter space (or LHS matrix) has dimension of length 11 with each dimension specifying an uncertain parameter vector of length 1,000. The base parameter values are chosen to define several scenarios of interest regarding the intensity of the turnover (weak and strong) and efficiency of the vaccine (weak and strong). In PRCC analysis, the parameters with the larger positive or negative PRCC values ($> 0.5$ or $< -0.5$) and with corresponding small p-values ($< 0.05$) are deemed the most influential in determining the outcome of the model. A positive (negative) correlation coefficient corresponds to an increasing (decreasing) monotonic trend between the chosen response function and the parameter under consideration. The results of the PRCC analyses are found in Tables \ref{sensibilité1}, \ref{sensibilité2}, \ref{sensibilité3}, \ref{sensibilité4} in \ref{Tables} .

\label{subsec:sensitivityanalysis}
{\allowdisplaybreaks
\begin{table}
	\caption{Summary of the influence of parameters on the total numbers of infected at different time points. } \label{sensibilité5}\scalebox{0.85}{\begin{tabular}{c} 
	\begin{tabular}{c|ccc} \hline
			Scenarios &  \multicolumn{3}{c}{ Total Infected: $I_1+I_2$}\\
			\cline{2-4}  & $t=50$ days&$t=100$ days& $t=200$ days\\
			\hline 	
		Strong turnover and weak efficiency &$\theta$(+), $\beta_{11}(+)$, $\mu(-)$,$\gamma_{1}(-)$&$\theta(+)$, $\beta_{11}(+)$, $\mu(-)$,$\gamma_{1}(-)$&$\theta(+)$, $\beta_{11}(+)$, $\mu(-)$,$\gamma_{1}(-)$\\
		Strong turnover and strong efficiency& $\theta(+)$, $\beta_{11}(+)$,$\mu(-)$,$\gamma_{1}(-)$&$\theta(+)$, $\beta_{11}(+)$, $\mu(-)$,$\gamma_{1}(-)$&$\theta(+)$, $\beta_{11}(+)$, $\mu(-)$,$\gamma_{1}(-)$ \\
		Weak turnover and weak efficiency & $\beta_{11}(-)$,$\beta_{21}(-)$, $\beta_{22}(-)$&$\beta_{21}(-)$, $\beta_{22}(-)$,$\gamma_{1}(+)$,$\gamma_{2}(+)$&$\theta(+)$, $\beta_{21}(-)$,$\gamma_{1}(+)$\\
		Weak turnover and strong efficiency&$\theta(+)$, $\beta_{11}(-)$,$\beta_{21}(-)$,$\gamma_{1}(-)$&$\beta_{21}(-)$,$\mu(-)$,$\gamma_{1}(+)$&$\theta(+)$, $\beta_{21}(-)$,$\mu(-)$,$\gamma_{1}(+)$\\
			\hline
		\end{tabular} 
	\end{tabular}}
\end{table}

Based on the results from Tables \ref{sensibilité1}, \ref{sensibilité2}, \ref{sensibilité3}, \ref{sensibilité4}, we provide in table \ref{sensibilité5}, a summary of the the parameters that significantly affect the number of infected. Overall, it appears that the recruitment rate, $\theta$ and the recovery rate of the infected who have not been vaccinated, $\gamma_1$, are the two main parameters driving the number of infected. This suggests that an effective control strategy should aim to limit significantly the immigration of new hosts in the population (to decrease $\theta$) and improve the treatment of infected people (to increase $\gamma_1$). We then proceed to a similar analysis with $\mathcal{R}_0$, and summarize the sensitivity analysis of the LHS and PRCC techniques in Figure \ref{PrccR}. We find, perhaps unsurprisingly, that the proportion of new hosts vaccinated, $p$, is the most significant parameter explaining the change in $\mathcal{R}_0$, along with the transmission rate from unvaccinated infected to unvaccinated susceptibles, $\beta_{11}$ and the recovery rate of the infected who have not been vaccinated, $\gamma_1$ (Table \ref{sensibilité5}). 

\begin{figure}[!h]
	\centering
	\begin{subfigure}{0.45\textwidth}
		\includegraphics[width=1.1\textwidth]{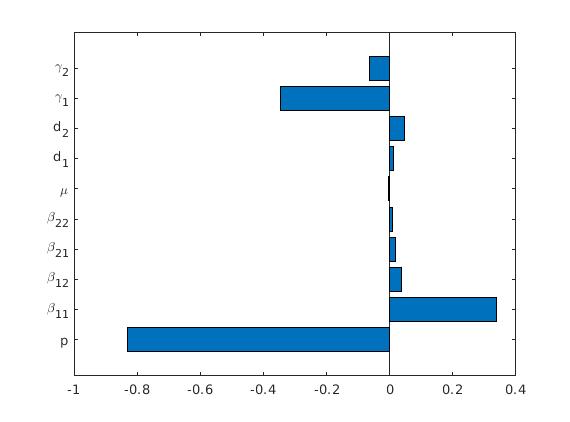}
		\caption{Strong turnover and weak efficiency}
	\end{subfigure}
	\begin{subfigure}{0.45\textwidth}
		\includegraphics[width=1.1\textwidth]{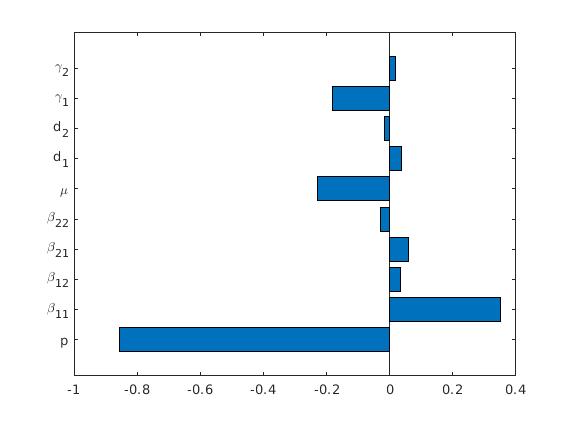}
		\caption{Strong turnover and strong efficiency}
	\end{subfigure}
	\newline
	
	\begin{subfigure}{0.45\textwidth}
		\includegraphics[width=1.1\textwidth]{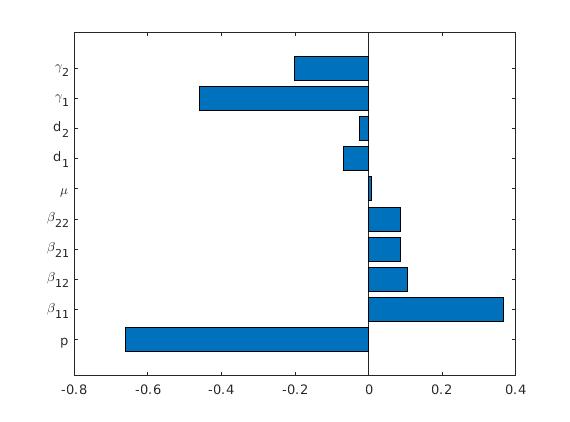}
		\caption{Weak turnover and weak efficiency}
	\end{subfigure}
	\begin{subfigure}{0.45\textwidth}
		\includegraphics[width=1.1\textwidth]{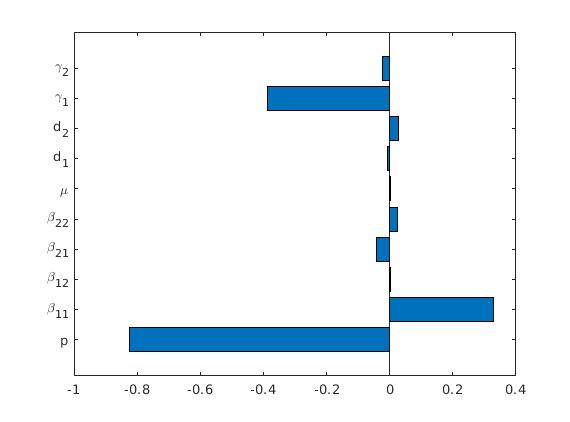}
		\caption{Weak turnover and strong efficiency}
	\end{subfigure}
	\caption{PRCCs describing the impact of model's parameters on $\mathcal{R}_0$ of the model (\ref{deterministic}) with respect to some scenarios. The range of the parameters in (a) (respectively in (b), (c) and (d)) is the same as given on Table \ref{sensibilité1}(respectively on Table \ref{sensibilité2},\ref{sensibilité3},\ref{sensibilité4}).}
	\label{PrccR}
\end{figure}

}

\subsection{Interplay between vaccine efficiency and population turnover}
\label{subsec:Impactofthevaccineefficiency}
{\allowdisplaybreaks
We now study the effect of population turn-over and vaccine efficiency on the epidemiological dynamics. Specifically, we use numerical simulations to find the vaccination coverage necessary to eradicate the disease in the community ($\mathcal{R}_0$ satisfying the corollary \ref{cor}) under two population turnover rates (fixing the ratio $\theta / \mu$, we define strong turnover with $\theta=1000$ and $\mu=0.09$, and weak with $\theta=10$ and $\mu=0.0009$), when the efficiency of the vaccine only reduces transmission. The vaccine efficiency is set as weak ($\beta_{21}=(1-0.5)\beta_{11}$ and $\beta_{22}=(1-0.5)\beta_{12}$, defining an efficiency of $50\%$) or strong ($\beta_{21}=(1-0.9)\beta_{11}$ and $\beta_{22}=(1-0.9)\beta_{12}$, defining an efficiency of $90\%$).
	
	\begin{figure}[!h]
	\centering
	\begin{subfigure}[b]{0.45\textwidth}
		\includegraphics[width=1\linewidth]{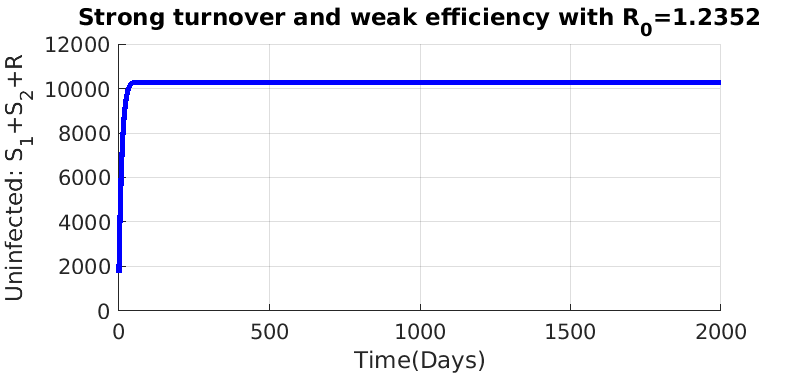}
		\caption{}
	\end{subfigure}
	\begin{subfigure}[b]{0.45\textwidth}
		\includegraphics[width=1\linewidth]{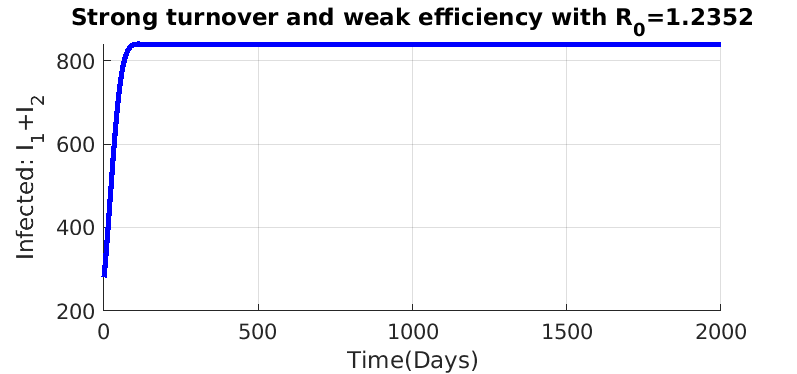}
		\caption{}\label{STWE}
	\end{subfigure}	
\begin{subfigure}[b]{0.45\textwidth}
	\includegraphics[width=1\linewidth]{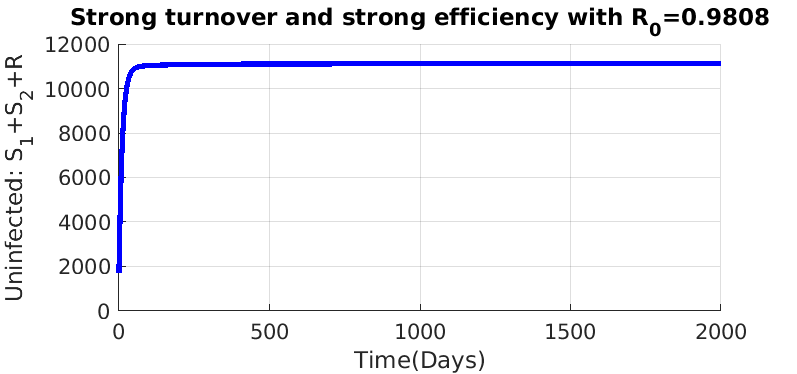}
	\caption{}
\end{subfigure}
\begin{subfigure}[b]{0.45\textwidth}
	\includegraphics[width=1\linewidth]{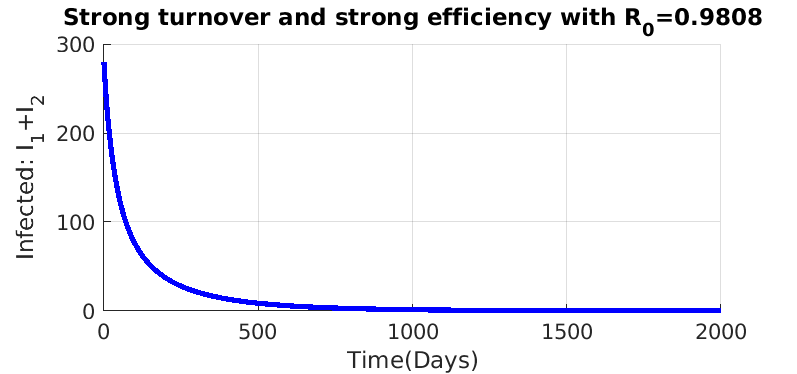}
	\caption{}\label{STSE}
\end{subfigure}
	\caption{Epidemiological dynamics with the initial conditions $S_1(0)=1000$, $S_2(0)=700$, $I_1(0)=200$, $I_2(0)=80$, $R(0)=20$ for various scenarios assuming the parameters $\beta_{11}= 0.35$, $\beta_{12}=0.28$, $p=0.5$ and strong population turnover ($\theta=1000$, $\mu=0.09$). We present under weak vaccine efficiency ($\beta_{21}=0.175, \beta_{22}=0.14$), the number of (a) uninfected and (b) infected individuals. We present under strong vaccine efficiency ($\beta_{21}=0.035, \beta_{22}=0.028$) the number of (c) uninfected and (d) infected individuals. Others parameters values are as in Table \ref{parameter}.}		
\end{figure}	
	
\subsubsection{Strong population turnover}
The epidemiological dynamics in Figure\ref{STWE} under strong turnover and weak vaccine efficiency ($\mathcal{R}_0=1.2352$) shows that the dynamics reaches the endemic disease equilibrium. Furthermore if $p$ takes value between $0$ and $p_1$ (with $p_1\approx0.696$), the basic reproduction number is greater than $1$, but if $p$ is between $p_1$ and $1$, the basic reproduction number is less than $1$ (as predicted in the analytical results in Corollary \ref{cor}). So to eradicate the disease under strong population turnover and weak efficiency of the vaccine, a minimum vaccination rate is needed and defined by $p_1$. Under strong turnover and strong efficiency (Figure\ref{STSE}, with $\mathcal{R}_0=0.9808$) the disease becomes extinct. Furthermore if the parameter $p$ between $0$ and $p_2$ with $p_2\approx0.489$, the basic reproduction number is greater than $1$, while for $p$ between $p_2$ and $1$, the basic reproduction number is less than $1$. So to eradicate the disease in this context of strong turnover and strong efficiency of the vaccine, there is a need to vaccinate more than $48.9\%$ of the new host individuals.
	
	\begin{figure}[!h]
		\centering
		\begin{subfigure}[b]{0.45\textwidth}
			\includegraphics[width=1\linewidth]{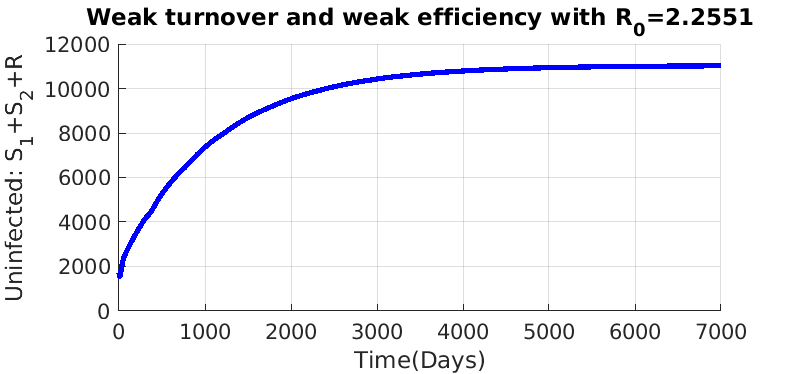}
			\caption{}
		\end{subfigure}
		\begin{subfigure}[b]{0.45\textwidth}
			\includegraphics[width=1\linewidth]{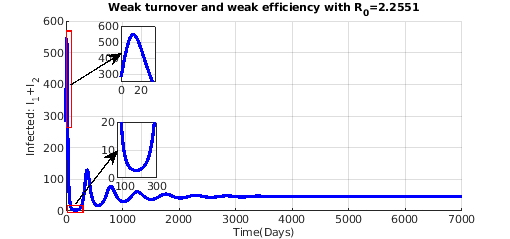}
			\caption{}\label{WTWE}
		\end{subfigure}
		\begin{subfigure}[b]{0.45\textwidth}
		\includegraphics[width=1\linewidth]{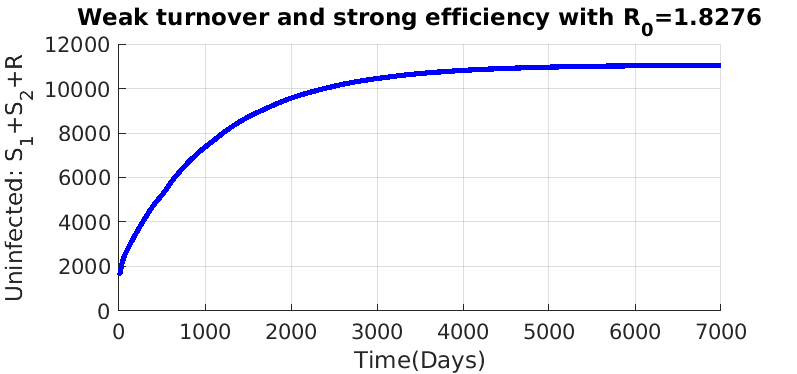}
		\caption{}
	\end{subfigure}
	\begin{subfigure}[b]{0.45\textwidth}
	\includegraphics[width=1\linewidth]{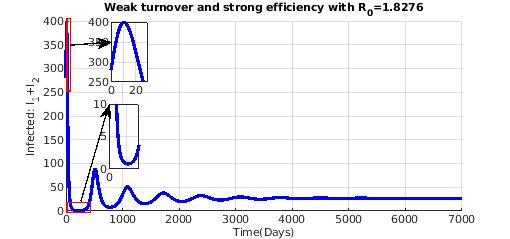}
	\caption{}\label{WTSE}
\end{subfigure}
		\caption{Simulation of model (\ref{deterministic}) at the initial conditions $S_1(0)=1000$, $S_2(0)=700$, $I_1(0)=200$, $I_2(0)=80$, $R(0)=20$ when $\theta=10$, $\beta_{11}= 0.35$, $\beta_{12}=0.28$, $\beta_{21}=0.175, \beta_{22}=0.14$, $\mu=0.0009$, $p=0.5$, (a) Uninfected individuals in weak turnover and weak efficiency scenario and (b) Infected individuals in weak turnover and weak efficiency scenario. When $\theta=1000$, $\beta_{11}= 0.35$, $\beta_{12}=0.28$, $\beta_{21}=0.035, \beta_{22}=0.028$, $\mu=0.0009$, $p=0.5$,(c) Uninfected individuals in weak turnover and strong efficiency scenario and (d) Infected individuals in weak turnover and strong efficiency scenario. Others parameters values are as in Table \ref{parameter}.}
	\end{figure}

\subsubsection{Weak population turnover}
To illustrate a weak population turnover, we consider the values $\theta=10$ and $\mu=0.0009$, noting that  the ratio of $\theta / \mu$ is the same as for the strong turnover investigated above. Under weak turnover, the epidemiological dynamics exhibits damped oscillations (recurring outbreaks) before stabilizing at the endemic state with disease persistence (Figure\ref{WTWE} with $\mathcal{R}_0=2.2551$, Figure\ref{WTSE} with $\mathcal{R}_0=1.8276$). 
These oscillations are due to the fact that individuals migrate rapidly in the recovered compartment, and a new outbreak only occurs when a sufficient number of susceptible are available from new recruitment into the population and recovered individuals loosing their immunity (so-called waning immunity). This phenomenon was also described in \cite{Ashby2021, Pulliam2007, Gumel2006, McLean2002}, and the effect of turnover and waning immunity is specifically described in \cite{Ashby2021, Pulliam2007}. \\
With respect to the control of the disease, under weak vaccine efficiency, $p$ can take any value between $0$ and $1$, the basic reproduction number is always greater than $1$ (Figure\ref{WTWE} with $\mathcal{R}_0=2.2551$). In contrast, when vaccine efficiency is strong, three cases occur Figure\ref{WTSE} (with $\mathcal{R}_0=1.8276$). When $p$ has a value between $0$ and $p_3$ with $p_3\approx0.753$, the basic reproduction number is greater than $1$ and we observe a damped periodicity of the number of infected individuals converging towards a stable endemic state. When $p$ takes values between $p_3$ and $p_4$ (with $p_4\approx0.756$), the basic reproduction number, $\mathcal{R}_0$, is greater than $1$ but there are no periodic oscillations. And for $p \in [p_4, 1]$, the basic reproduction number, $\mathcal{R}_0$, is less than $1$, and disease becomes extinct. Note that between $p_3$ and $p_4$, the behavior can change very finely, but the resolution of our simulations does not allow us to decide on a very precise bound when oscillations occur or not. Therefore, to eradicate the disease in this context of weak population turnover and strong efficiency of the vaccine, a high vaccination coverage (more than $75.6\%$ of the new host individuals) is needed. Our results extend those in \cite{Nuismer2016} showing that it is feasible to control disease by a weakly efficient vaccine acting on disease transmission, but that the required vaccination coverage depends on the population turnover. We note that the persistence of an endemic equilibrium is predicted by the condition $\mathcal{R}_0>1$, even if damped oscillations in the number of infected individuals occur. In other words, while the population turnover does not factor directly in the analytical expression of $\mathcal{R}_0$, it enters only indirectly by affecting the proportion of susceptible individuals available (eq. \ref{R_0}). The simulation results provide examples of the analytical expressions obtained in eq.~\ref{proportion} following the Corollary \ref{cor}.

\subsection{Interplay between types of vaccines and population turnover}
\label{subsec:Assessmentofcombiningdifferenttypesofvaccines}
{\allowdisplaybreaks	
We now assume that a vaccine has two potential mechanisms of action on the disease, namely blocking transmission and/or favouring the recovery of infected individuals. We investigate the effect of these vaccine types on the epidemiology depending on the population turnover. Specifically, model (\ref{deterministic}) is slightly modified to allow for the assessment of the efficiency of the vaccine regarding  the probability of being infected and the recovery rate. This is achieved by simply rescaling the parameters as follows: 
\begin{align}
\beta_{21}=(1-\varepsilon)\beta_{11},\beta_{22}=(1-\varepsilon)\beta_{12}, \hspace{.1cm} \text{and} \hspace{.1cm} \gamma_{1}=(1-\nu)\gamma_{2},   \label{scale}
\end{align}
where $0\leq\varepsilon\leq 1$ represents the effect of the vaccine on the transmission and $0\leq\nu\leq 1$ represents the effect of the vaccine on the ability of being recovered. Substituting the rescaled expressions
for (\ref{scale}) into the model (\ref{deterministic}), one deduces that the basic reproduction number the model (\ref{deterministic}) can be rewritten as:
	
\begin{align}
	\mathcal{R}_0=&\dfrac{1}{2}\Big[ (1-p)\mathcal{R}_{0,11}+p\mathcal{R}_{0,22}+\sqrt{\Big((1-p)\mathcal{R}_{0,11}- p\mathcal{R}_{0,22}\Big)^{2}+4p(1-p)\mathcal{R}_{0,12}\mathcal{R}_{0,21}}\Big],\label{R_0m}
\end{align}
with $\mathcal{R}_{0,11}=\dfrac{\beta_{11}}{\mu + (1-\nu)\gamma_{2} +d_1}$,  $\mathcal{R}_{0,12}=\dfrac{\beta_{12}}{\mu + (1-\nu)\gamma_{2} +d_1}$, $\mathcal{R}_{0,21}=\dfrac{(1-\varepsilon)\beta_{11}}{\mu + \gamma_{2} +d_2}$ and $\mathcal{R}_{0,22}=\dfrac{(1-\varepsilon)\beta_{12}}{\mu + \gamma_{2} +d_2}$. Simulations are carried out to assess the interplay of the type of vaccine and the population turnover.

\begin{figure}[!h]
	\centering
	\begin{subfigure}[b]{0.48\textwidth}
		\includegraphics[width=1.1\linewidth]{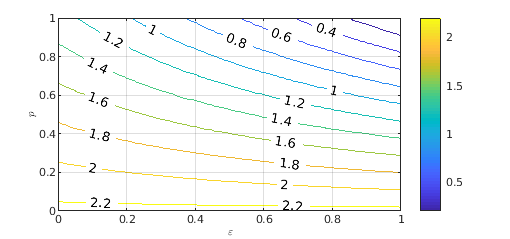}
		\caption{}\label{ContourST1}
	\end{subfigure}
	\begin{subfigure}[b]{0.48\textwidth}
		\includegraphics[width=1.1\linewidth]{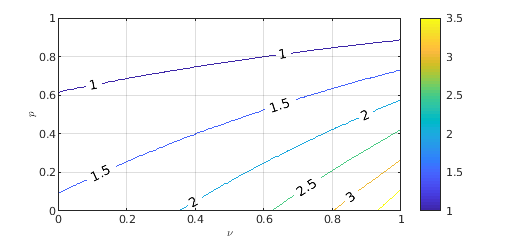}
		\caption{}\label{ContourST2}
	\end{subfigure}
	
	\begin{subfigure}[b]{0.48\textwidth}
		\includegraphics[width=1.1\linewidth]{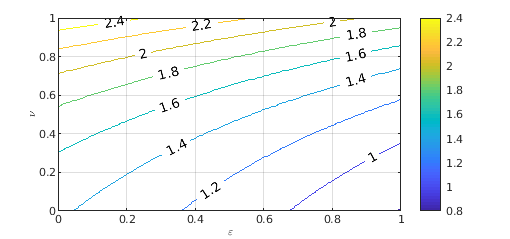}
		\caption{}\label{ContourST3}
	\end{subfigure}	
	\caption{Contour plots of the basic reproduction number ($\mathcal{R}_0$) of the model (\ref{deterministic}) with a strong population turnover as a function of (a) vaccination coverage, $p$, and vaccine efficiency on disease transmission, $\varepsilon$ (with fixed $\nu=0.5$); (b) vaccination coverage, $p$, and vaccine efficiency on recovery, $\nu$ (with fixed $\varepsilon=0.5$); and (c) vaccine efficiency on recovery, $\nu$, and vaccine efficiency on transmission, $\varepsilon$ (with fixed $p=0.5$). The parameters are $\theta=1000$, $\beta_{11}= 0.35$, $\beta_{12}=0.28$, $\beta_{21}=0.175, \beta_{22}=0.14$, $\mu=0.09$, $d_1=0.0008$, $d_2=0.0001$, $\gamma_{1}=0.065$, $\gamma_{2}=0.13$.}\label{ContourST}
\end{figure}

Under a strong population turnover, as expected, the value of the reproduction number decreases as coverage and efficiency of the vaccine on the transmission increase (Figure \ref{ContourST1}), and if the vaccine is designed to only decrease the transmission by $80\%$ (\textit{i.e.} $\varepsilon=0.8$), the eradication of the disease in the host population can be achieved ($\mathcal{R}_0 <1$) if at least $70\%$ of the population is vaccinated (Figure \ref{ContourST1}). On the other hand, the value of the reproduction number decreases as coverage increases and efficiency of the vaccine favoring recovery decreases (Figure \ref{ContourST2}). With a vaccine designed to enhance recovery by $20\%$ (\textit{i.e.} $\nu=0.2$), the eradication of the disease in the host population can be achieved ($\mathcal{R}_0 <1$) if at least $68\%$ of the population is vaccinated (Figure \ref{ContourST2}). In Figure \ref{ContourST3}, we present the effect of the combined efficiency of the vaccine (decreasing transmission and favouring recovery) on the reproduction number at $p=0.5$. The eradication of the disease can be achieved ($\mathcal{R}_0<1$) if the vaccine has a combined efficiency of at least $85\%$ against infection (and thus transmission) and at least $20\%$ to enhance recovery (for a given vaccination coverage of $p=0.5$). These figures represent subsets of the general results presented in Figure \ref{ScatterST}, in which $\mathcal{R}_0$ is a function of $\varepsilon$, $\nu$ and $p$. The use of a vaccine with a combined efficiency (decreasing transmission and favouring recovery) can be associated to the vaccination coverage in order to achieve the elimination of the disease. For example, with a vaccination coverage of $20\%$ ($p=0.2$), it is not possible to eliminate the disease no matter the combined efficiency of the vaccine (Figure \ref{SliceST}), while at $80\%$ coverage ($p=0.8$), there are several combinations of vaccine types, decreasing transmission and favouring recovery, that can promote disease control (Figure \ref{SliceST}). \\

The above results change dramatically under a weak population turnover. As expected, the value of the reproduction number decreases as coverage and efficiency of the vaccine on the transmission increase (Figure \ref{ContourWT1}), but a higher vaccination coverage is needed compared to the strong population turnover to achieve $\mathcal{R}_0<1$. Moreover, it is not possible to eradicate the disease if 1) the vaccine is only efficient to enhance recovery, no matter the vaccination coverage (Figure \ref{ContourWT2}), or 2) if the efficiency of the vaccine is combined but vaccination coverage is $p=0.5$ (Figure \ref{SliceWT}). The general results of $\mathcal{R}_0$ as a function of $\varepsilon$, $\nu$ and $p$ demonstrate that under weak population turnover, disease eradication requires a very strong efficiency of the vaccine and a high coverage (Figure \ref{ScatterWT}).
	
}

\subsection{Interplay between vaccine efficiency trade-off and population turnover}
\label{subsec:Assessmentofvaccineefficiencytrade-off}
{\allowdisplaybreaks

So far we have assumed that all parameters of vaccine efficiency can be independently chosen from one another. We study, here, the epidemiological dynamics when there exists a possible (and realistic) trade-off (relationship) between the vaccine efficiency on the transmission and on the recovery. We assume three possible trade-off curves: convex($\nu=\varepsilon^{2}$), concave($\nu=\sqrt{\varepsilon}$) or linear($\nu=\varepsilon$). Under a strong population turnover, assuming a vaccine of at least $60\%$ of efficiency, disease eradication can be achieved ($\mathcal{R}_0<1$) if the coverage is at least $65\%$ under a convex trade-off (Figure \ref{contourconveST}), at least $80\%$ under a concave trade-off (Figure \ref{contourconcaST}) and at least $75\%$ under a linear trade-off (Figure \ref{contourlineaST}). Imposing vaccine trade-off affects therefore the shape of the $\mathcal{R}_0$ curves in Figure \ref{contourconveST}, \ref{contourconcaST}, \ref{contourlineaST} compared to Figures \ref{ContourST1} and \ref{ContourST2}, and may be important to predict the minimum vaccination coverage to be achieved. However under a weak population turnover, the disease persists no matter the vaccination coverage and whatever trade-off are assumed in the vaccine (Figures \ref{contourconveWT}, \ref{contourconcaWT} and \ref{contourlineaWT}).

\begin{figure}[!h]
	\centering
	\begin{subfigure}[b]{0.48\textwidth}
		\includegraphics[width=1.1\linewidth]{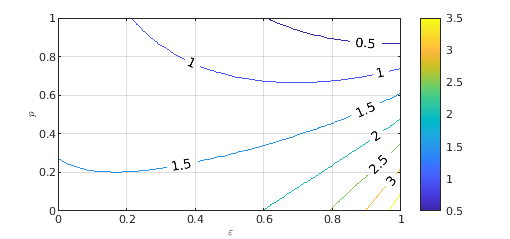}
		\caption{}\label{contourconveST}
	\end{subfigure}
	\begin{subfigure}[b]{0.48\textwidth}
		\includegraphics[width=1.1\linewidth]{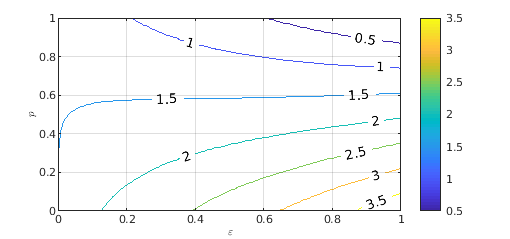}
		\caption{}\label{contourconcaST}
	\end{subfigure}
	
	\begin{subfigure}[b]{0.48\textwidth}
		\includegraphics[width=1.1\linewidth]{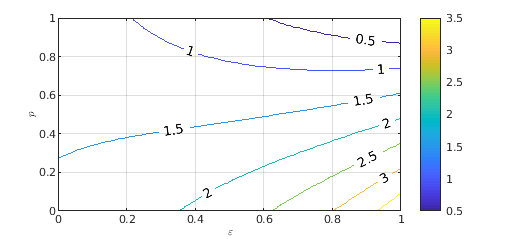}
		\caption{}\label{contourlineaST}
	\end{subfigure}	
	\caption{Contour plots of the basic reproduction number ($\mathcal{R}_0$) of the model (\ref{deterministic}) with a strong population turnover as a function of vaccine coverage, $p$, and vaccine efficiency on the transmission, $\varepsilon$ when: (a)  $\nu=\varepsilon^{2}$ (convex relationship); (b) $\nu=\sqrt{\varepsilon}$ (concave relationship); (c)$\nu=\varepsilon$ (linear relationship). The parameters are $\theta=1000$, $\beta_{11}= 0.35$, $\beta_{12}=0.28$, $\beta_{21}=0.175, \beta_{22}=0.14$, $\mu=0.09$, $d_1=0.0008$, $d_2=0.0001$, $\gamma_{1}=0.065$, $\gamma_{2}=0.13$.}
\end{figure}

}

\section{Discussion and Conclusion}
\label{sec:conclusion}
When a large proportion of a population becomes immune to a virus, it becomes harder for the disease to spread. This is the core concept underlying the concept of herd immunity \cite{Djatcha2017,Ashby2021, Mancusowill2021}. However, there are numerous individuals who refuse to be vaccinated because of various reasons (health concerns, lack of information, systemic mistrust, ...), and some vaccines provide only partial protection from disease or can be only efficient against few disease variants (see the recent Covid-19 epidemics and the vaccine efficiency and waning of immunity against different variants). Therefore, it is rather common that pathogens face an heterogeneous population of vaccinated and unvaccinated hosts, and this has consequences for the evolution of the disease itself \cite{gandon2003imperfect, Alizon2009, gandon2007evolutionary}. In this study, we used mathematical modelling
approaches (analysis and numerical simulations) to assess the potential population-level impact of the using different types of imperfect vaccines to  control the burden of a disease in a community. In a first part, we provide a theoretical analysis of the model, including the basic reproduction number $\mathcal{R}_{0}$ and conditions for the stability of the equilibria. We derive the condition to be satisfied regarding the proportion of vaccinated individuals at steady state in order to attain herd immunity. We express this condition as the critical coverage to be achieved for $\mathcal{R}_0<1$. \\

When the vaccine is developed to prevent infection and stop transmission, our result based on Covid-19 parameter estimates show that it is possible to eliminate the disease with a strong population turnover if the vaccination coverage is greater than $69.6$\% (respectively $48.9$\%) with a  weak (respectively strong) efficiency of the vaccine. However, when population turnover is weak, we observe damped oscillations and eradication is possible with a vaccine with high efficiency and a coverage greater than $75.6$\%. Otherwise, the disease persists and becomes endemic in the community. We highlight here the effect of population turnover as an important first factor in deciding the effectiveness of vaccination campaigns (as suggested in \cite{Scherer2002, Pulliam2007, Knight2020}). For example of application to a human population, the turnover can be consider as migration in and out of the community since the birth and death rate are usually small and fairly constant. Our results suggest that for a community with strong migration (strong turnover), we can vaccinate individuals coming in in order to reduce the basic reproduction number. However, if there is a weak migration (weak turnover) as for example occurred during lockdown when flights and travel are restricted, the vaccination strategy should be improved by undertaking a mass vaccination campaign and using a high efficiency vaccine. A similar reasoning applies to livestocks with (potential vaccinated) calf migrations between farms which influences the epidemic.  }

We then complexify our analysis to analyse more finely the effect of the type of vaccine and its efficiency on disease dynamics. The vaccine can decrease transmission and/or favour recovery of infected individuals. Disease eradication is possible if the vaccine decreases transmission by $82$\%,  enhances recovery by at least $25$\% and a vaccination coverage of $82$\% is achieved under a strong population turnover. Under weak turnover, maximum vaccine efficiency and coverage are required. Therefore, there is also an interplay between the strength of population turnover and the efficiency of the vaccine (and the property of the vaccine). Finally, we explore the importance of vaccine design if trade-off between the vaccine efficiency to stop transmission (infection) and disease recovery are expected. We use three trade-off curves, and show that the convex ($\nu = \varepsilon^2$) function is the most desirable, when the efficiency of the vaccine is at least $60$\% under a strong turnover of population. However, under a weak population turnover, the disease cannot be easily eradicated no matter the vaccination coverage and the efficiency of a combined vaccine. Furthermore, we notice that a smaller vaccination coverage and/or efficiency is needed when using a vaccine designed with a convex trade-off between the above two properties (decrease transmission and favour recovery) than other vaccines (different trade-offs or no trade-off). 

Our model has some limitations and advantages compared to previous work in the literature. First, we use, for illustrative purpose, Covid-19 parameters to exemplify expected threshold for vaccination coverage for a highly transmissible disease. Second, our model does not explicitly account for a continuous vaccination (or a large vaccination campaigns) of individuals in a community. Vaccination is linked in our model to the population turnover, explaining the appearance of periodic oscillations in disease incidence (the honey moon periods). Such periodic epidemics occur and are predicted for Covid-19, but as a consequence of immunity waning of the various vaccines against new variants \cite{Mancusowill2021}. Third, we use a frequency-dependent transmission which allows us to derive analytical results in more depth than some previous models, but may underestimate the spread of disease and speed of disease dynamics.

This model contains some general conlusions which are not only applicable to human populations, but also domesticated animals or even crops. Domesticated animals also require vaccinations, and our study draws recommendations on the importance of turnover and migration rates in and out of the population. Our results also suggest that in domesticated animals, the type of vaccine can be adjusted depending on the disease, especially if it is desirable that infected animal recover well, rather than attempting to prevent any transmission. In addition, we also suggest that the principles of the model apply to plant (crop) immunization. To protect plants against invasion of pathogens or pests, one can use different biotic and synthetic chemicals to induce immunity in the plant \cite{Dyakov2007} or protect plants by spraying fungicides. In a field, or among fields, some plants will be more resistant than others for a certain period of time. The spray is equivalent to the vaccination, and is in that case decoupled from the population turnover which is the planting/renewal and harvesting/removal of plants. Plant epidemiology modelling has been used to predict the efficiency of imperfect fungicide treatments on the epidemics and on yield \cite{Rock2014}, with results mirroring our own.

In summary, our study showed that it is possible to achieve disease control by vaccination in a population with strong turnover, even if we use a weak imperfect vaccine designed to reduce only transmission. However, a higher vaccination coverage and a strong efficiency vaccine are necessary to control the disease under weak population turnover. Besides, a vaccine with convex trade-off between the efficiency to reduce transmission and to enhance recovery is recommendable along with a high vaccination coverage.

\section*{Authors contributions}
Conception and design: HLNB, OMP, AT; Formal investigation: HLNB, OMP; Numerical simulations: HLNB; Writing first draft: HLNB; Supervision: OMP, AT, JMN; Revision of draft: AT, OMP, JMN.

\section*{Declaration of Competing Interest}
\noindent The authors declare that they have no known competing financial interests or personal relationships that could have appeared to influence the work reported in this paper.

\section*{Acknowledgements}
HLB was funded by a grant from the African Institute for Mathematical Sciences,
www.nexteinstein.org, with financial support from the Government of Canada, provided through Global Affairs Canada,\\ www.international.gc.ca, and the International Development Research Centre, www.idrc.ca. OPM acknowledge the Financial support from the Alexander von Humboldt Foundation, under the programme financed by the German Federal Ministry of Education and Research entitled German Research Chair No 01DG15010. AT acknowledges support from the Deutsche Forschungsgemeinschaft (DFG) through the TUM International Graduate School of Science and Engineering (IGSSE), GSC 81, within the project GENOMIE-QADOP, and the TUM Global Incentive Fund (exchange grant with Ghana).

\newpage


\newpage
\appendix
\section{Proof of Theorem 3.4}\label{GAS_DFE}

\begin{proof}
	The system (\ref{deterministic}) can be written as:
	\begin{equation}
		\begin{array}{rl}
			\dfrac{\mathrm{d}a}{\mathrm{d}t}&=(F-V)a-f(a,b),\\\\
			\dfrac{\mathrm{d}b}{\mathrm{d}t}&=g(a,b),
		\end{array}
	\end{equation} 
	where $a=(I_1,I_2)^{T}$ is the vector representing the infected classes, $b=(S_1,S_2,R)^{T}$ is the vector representing the uninfected classes, the matrices $F$ and $V$ are given as in Equation (\ref{FV}) and
	$$f(a,b)= \begin{bmatrix} \beta_{11}\Bigg(\dfrac{S_1^{0}}{N^{0}}-\dfrac{S_1}{N}\Bigg)I_1+\beta_{12}\Bigg(\dfrac{S_1^{0}}{N^{0}}-\dfrac{S_1}{N}\Bigg)I_2\\\\ \beta_{21}\Bigg(\dfrac{S_2^{0}}{N^{0}}-\dfrac{S_2}{N}\Bigg)I_1+\beta_{22}\Bigg(\dfrac{S_2^{0}}{N^{0}}-\dfrac{S_2}{N}\Bigg)I_2\end{bmatrix} \hspace{.2cm}\text{and}\hspace{.2cm}g(a,b)= \begin{bmatrix} \theta(1-p)+\lambda_1S_1-\mu S_1\\\\ \theta p+\lambda_2S_2-\mu S_2\\\\ \gamma_1I_1+\gamma_2I_2-\mu R\end{bmatrix}.$$ 
	Then, 
	$$V^{-1}F=\begin{bmatrix} \dfrac{\beta_{11}S_1^{0}}{N^{0}(\mu + \gamma_1 +d_1)}& \dfrac{\beta_{12}S_1^{0}}{N^{0}(\mu + \gamma_1 +d_1)}\\\\\dfrac{\beta_{21}S_2^{0}}{N^{0}(\mu + \gamma_2 +d_2)} & \dfrac{\beta_{22}S_2^{0}}{N^{0}(\mu + \gamma_2 +d_2)}\end{bmatrix}, $$
	and the left eigenvector of $V^{-1}F$, $(\omega_1,\omega_2)$ associated with the eigenvalue $\mathcal{R}_0$ is given by:
	
	$\omega_1=1$ and $\omega_2=\dfrac{N^{0}(\mu + \gamma_2 +d_2)}{\beta_{21}S_2^{0}}\Bigg(\mathcal{R}_0- \dfrac{\beta_{11}S_1^{0}}{N^{0}(\mu + \gamma_1 +d_1)}\Bigg)$ since
	
	$(\omega_1,\omega_2)V^{-1}F=\mathcal{R}_0(\omega_1,\omega_2).$
	
	Let us consider the following Lyapunov function:
	\begin{align}
		Q =& (\omega_1,\omega_2)V^{-1}(I_1,I_2)^{T} \nonumber\\
		=& \dfrac{I_1}{\mu + \gamma_1 +d_1}+\Bigg(\mathcal{R}_0- \dfrac{\beta_{11}S_1^{0}}{N^{0}(\mu + \gamma_1 +d_1)}\Bigg)\dfrac{N^{0}I_2}{\beta_{21}S_2^{0}}.
	\end{align}
	Then the derivative of $Q$ with respect to $t$ yields, $$Q^{'}=(\mathcal{R}_0-1)(\omega_1,\omega_2)^{T}a-(\omega_1,\omega_2)^{T}V^{-1}f(a,b).$$
	Since $(\omega_1,\omega_2)\geqslant 0,$ $V^{-1}\geqslant 0$ and $f(a,b)\geqslant0$ in $\Omega$, then $(\omega_1,\omega_2)^{T}V^{-1}f(a,b)\geqslant0.$
	Therefore, $Q^{'}\leqslant 0$ in $\Omega$ if $\mathcal{R}_0\leqslant 1$ and $Q$ is a Lyapunov function for the system (\ref{deterministic}). By LaSalle’s invariance principle \cite{lasallestability,lasalle1976stability}, $Q^{0}$ is GAS in $\Omega$.
	
	If $\mathcal{R}_0>1,$ then $Q^{'}=(\mathcal{R}_0-1)(\omega_1,\omega_2)^{T}a>0$ provided that $a>0$ and $b=(S_1^{0},S_2^{0},0)$. By continuity, $Q^{'}>0$ in the neighborhood of $Q^{0}$. Solutions in positive cone sufficiently close to $Q^{0}$ move away from $Q^{0}$, implying that $Q^{0}$ is unstable. Thus, the model system (\ref{deterministic}) is uniformly persistent \cite{freedman1994uniform, li1999global}. Uniform persistence and the positively invariance of $\Omega$ imply the existence of an endemic equilibrium.
\end{proof}

\section{Proof of Theorem 3.5}\label{GAS_EE}
{
\begin{proof}
	Consider the following Lyapunov candidate function:
	
	$$L=L_{1}+L_{2}+L_{3}+L_{4},$$where $L_1=S_{1}-S_1^{*}-S_1^{*}\log\Bigg(\dfrac{S_1}{S_1^{*}}\Bigg)$, $L_2=S_{2}-S_2^{*}-S_2^{*}\log\Bigg(\dfrac{S_2}{S_2^{*}}\Bigg)$, $L_3=I_{1}-I_1^{*}-I_1^{*}\log\Bigg(\dfrac{S_3}{S_3^{*}}\Bigg)$ and $L_4=I_{2}-I_2^{*}-I_2^{*}\log\Bigg(\dfrac{I_4}{I_4^{*}}\Bigg)$.
	
	Using the inequality $1-z+\log(z)\leqslant 0$ for $z>0$ with equality if and only if $z = 1$,
	differentiation and using the EE values give
	$$L^{'}=L_{1}^{'}+L_{2}^{'}+L_{3}^{'}+L_{4}^{'},$$
	where
	
	\begin{equation}\label{D1}
		\begin{array}{llll}
			L_1^{'}&=\Bigg(1-\dfrac{S^{*}_{1}}{S_1}\Bigg)\dfrac{\mathrm{d}S_{1}}{\mathrm{d}t}\\\\
			&=\Bigg(1-\dfrac{S^{*}_{1}}{S_1}\Bigg)\Bigg[\beta_{11}\dfrac{S_1^{*}I_1^{*}}{N^{*}}-\beta_{11}\dfrac{S_1I_1}{N}+\beta_{12}\dfrac{S_1^{*}I_2^{*}}{N^{*}}-\beta_{12}\dfrac{S_1I_2}{N}-\mu S_1+\mu S^{*}_{1}\Bigg]\\\\
			&=-\dfrac{\mu(S_1-S^{*}_{1})^{2}}{S_1}+\beta_{11}\dfrac{S_1^{*}I_1^{*}}{N^{*}}\Bigg[ 1-\dfrac{S^{*}_{1}}{S_1}-\dfrac{S_1I_1N^{*}}{S_1^{*}I_1^{*}N}+\dfrac{I_1N^{*}}{I_1^{*}N}\Bigg]+\beta_{12}\dfrac{S_1^{*}I_2^{*}}{N^{*}}\Bigg[ 1-\dfrac{S^{*}_{1}}{S_1}-\dfrac{S_1I_2N^{*}}{S_1^{*}I_2^{*}N}+\dfrac{I_2N^{*}}{I_2^{*}N}\Bigg].\\\\
		\text{Then} \quad	L_1^{'}	\leqslant&\beta_{11}\dfrac{S_1^{*}I_1^{*}}{N^{*}}\Bigg[\dfrac{I_1N^{*}}{I_1^{*}N}-\log\Bigg(\dfrac{I_1N^{*}}{I_1^{*}N}\Bigg)- \dfrac{S_1I_1N^{*}}{S_1^{*}I_1^{*}N}+ \log\Bigg(\dfrac{S_1I_1N^{*}}{S_1^{*}I_1^{*}N}\Bigg)
			\Bigg]\\\\
			&+\beta_{12}\dfrac{S_1^{*}I_2^{*}}{N^{*}}\Bigg[\dfrac{I_2N^{*}}{I_2^{*}N}-\log\Bigg(\dfrac{I_2N^{*}}{I_2^{*}N}\Bigg)
			-\dfrac{S_1I_2N^{*}}{S_1^{*}I_2^{*}N}+ \log\Bigg(\dfrac{S_1I_2N^{*}}{S_1^{*}I_2^{*}N}\Bigg)\Bigg].
		\end{array}
	\end{equation}	
We can also deduce in an analogous way:
	\begin{equation}\label{D2}
		\begin{array}{llll}
			L_2^{'}	&\leqslant\beta_{22}\dfrac{S_2^{*}I_2^{*}}{N^{*}}\Bigg[\dfrac{I_2N^{*}}{I_2^{*}N}-
			\log\Bigg(\dfrac{I_2N^{*}}{I_2^{*}N}\Bigg)- \dfrac{S_2I_2N^{*}}{S_2^{*}I_2^{*}N}+\log\Bigg(\dfrac{S_2I_2N^{*}}{S_2^{*}I_2^{*}N}\Bigg)\Bigg]\\\\
			&+\beta_{21}\dfrac{S_2^{*}I_1^{*}}{N^{*}}\Bigg[\dfrac{I_1N^{*}}{I_1^{*}N}-\log\Bigg(\dfrac{I_1N^{*}}{I_1^{*}N}\Bigg)- \dfrac{S_2I_1N^{*}}{S_2^{*}I_1^{*}N}+\log\Bigg(\dfrac{S_2I_1N^{*}}{S_2^{*}I_1^{*}N}\Bigg)\Bigg]. \end{array}
	\end{equation}
	
We also have
	
	\begin{equation}\label{D3}
		\begin{array}{llll}	L_3^{'}&=\Bigg(1-\dfrac{I^{*}_{1}}{I_1}\Bigg)\dfrac{\mathrm{d}I_{1}}{\mathrm{d}t}\\\\
			&=\Bigg(1-\dfrac{I^{*}_{1}}{I_1}\Bigg)\Bigg[\beta_{11}\dfrac{S_1I_1}{N}+\beta_{12}\dfrac{S_1I_2}{N}-(\mu +\gamma_1+d_1)I_1 \Bigg]\\\\
			&=\Bigg(1-\dfrac{I^{*}_{1}}{I_1}\Bigg)\Bigg[\beta_{11}\dfrac{S_1I_1}{N}+\beta_{12}\dfrac{S_1I_2}{N}-\beta_{11}\dfrac{S_1^{*}I_1}{N^{*}}+\beta_{12}\dfrac{S_1^{*}I_2^{*}I_1}{N^{*}I_1^{*}}\Bigg]\\\\
			&=\beta_{11}\dfrac{S_1^{*}I_1^{*}}{N^{*}}\Bigg[\dfrac{S_1I_1N^{*}}{S_1^{*}I_1^{*}N}-\dfrac{S_1N^{*}}{S_1^{*}N}-\dfrac{I_1}{I_1^{*}} +1\Bigg]+\beta_{12}\dfrac{S_1^{*}I_2^{*}}{N^{*}}\Bigg[\dfrac{S_1I_2N^{*}}{S_1^{*}I_2^{*}N}-\dfrac{S_1I_1^{*}I_2N^{*}}{S_1^{*}I_1I_2^{*}N}-\dfrac{I_1}{I_1^{*}} +1\Bigg],\\\\
			L_3^{'}&\leqslant\beta_{11}\dfrac{S_1^{*}I_1^{*}}{N^{*}}\Bigg[\dfrac{S_1I_1N^{*}}{S_1^{*}I_1^{*}N}-\log\Bigg(\dfrac{S_1I_1N^{*}}{S_1^{*}I_1^{*}N}\Bigg)-\dfrac{I_1}{I_1^{*}}+\log\Bigg(\dfrac{I_1}{I_1^{*}}\Bigg)\Bigg]\\\\
			&+\beta_{12}\dfrac{S_1^{*}I_2^{*}}{N^{*}}\Bigg[\dfrac{S_1I_2N^{*}}{S_1^{*}I_2^{*}N}-\log\Bigg(\dfrac{S_1I_2N^{*}}{S_1^{*}I_2^{*}N}\Bigg)-\dfrac{I_1}{I_1^{*}}+\log\Bigg(\dfrac{I_1}{I_1^{*}}\Bigg)
			\Bigg]. \end{array}
	\end{equation}
	
Similarly, we obtain
	\begin{equation}\label{D4}
		\begin{array}{llll}
			L_4^{'}	&\leqslant\beta_{22}\dfrac{S_2^{*}I_2^{*}}{N^{*}}\Bigg[\dfrac{S_2I_2N^{*}}{S_2^{*}I_2^{*}N}-\log\Bigg(\dfrac{S_2I_2N^{*}}{S_2^{*}I_2^{*}N}\Bigg)-\dfrac{I_2}{I_2^{*}}+ln\dfrac{I_2}{I_2^{*}} \Bigg]\\\\
			&+\beta_{21}\dfrac{S_2^{*}I_1^{*}}{N^{*}}\Bigg[\dfrac{S_2I_1N^{*}}{S_2^{*}I_1^{*}N}-\log\Bigg(\dfrac{S_2I_1N^{*}}{S_2^{*}I_1^{*}N}\Bigg)-\dfrac{I_2}{I_2^{*}}+\log\Bigg(\dfrac{I_2}{I_2^{*}} \Bigg)\Bigg]. \end{array}
	\end{equation}
	Therefore, by adding (\ref{D1}), (\ref{D2}), (\ref{D3}) and (\ref{D4}) we deduce
	\begin{align*}
		L^{'}\leqslant& \Bigg(-\dfrac{I_1N^{*}}{I_1^{*}N}+\log\Bigg(\dfrac{I_1N^{*}}{I_1^{*}N} \Bigg)\Bigg)\Bigg(-\beta_{11}\dfrac{S_1^{*}I_1^{*}}{N^{*}}-\beta_{21}\dfrac{S_2^{*}I_1^{*}}{N^{*}}\Bigg)\\
		&+\Bigg(-\dfrac{I_2N^{*}}{I_2^{*}N}+\log\Bigg(\dfrac{I_2N^{*}}{I_2^{*}N} \Bigg)\Bigg)\Bigg(-\beta_{12}\dfrac{S_1^{*}I_2^{*}}{N^{*}}-\beta_{22}\dfrac{S_2^{*}I_2^{*}}{N^{*}}\Bigg)\\
		&+\Bigg(-\dfrac{I_1}{I_1^{*}}+\log\Bigg(\dfrac{I_1}{I_1^{*}} \Bigg)\Bigg)\Bigg(\beta_{11}\dfrac{S_1^{*}I_1^{*}}{N^{*}}+\beta_{12}\dfrac{S_1^{*}I_2^{*}}{N^{*}}\Bigg)	\\
		&+\Bigg(-\dfrac{I_2}{I_2^{*}}+\log\Bigg(\dfrac{I_2}{I_2^{*}} \Bigg)\Bigg)\Bigg(\beta_{22}\dfrac{S_2^{*}I_2^{*}}{N^{*}}+\beta_{21}\dfrac{S_2^{*}I_1^{*}}{N^{*}}\Bigg).\\
\text{Then} \quad L^{'}\leqslant& 0, \hspace{.2cm}\text{since}\hspace{.2cm} -z+\log(z)\leqslant-1, \hspace{.2cm} \forall z>0.
	\end{align*}	
	Since $\{Q^{*}\}$ is the only invariant subset in $\Omega$ where $L=0$, therefore by LaSalle’s invariance principle \cite{lasalle1976stability}, $Q^{*}$ is GAS in $\Omega$.
\end{proof}
}

\newpage
\section{Tables}\label{Tables}

\begin{table}
	\caption{PRCC of model’s parameters at time $t$ (days) with strong PI and weak efficiency of vaccine. The values $\theta=1000$, $\mu= 0.09$, $\beta_{11}=0.35$, $\beta_{12}=0.28$, $\beta_{21}=0.175$, $\beta_{22}=0.14$ are used as baseline. } \label{sensibilité1}\begin{tabular}{c} 
		\begin{tabular}{c|ccc|rrr} \hline
			Parameters & \multicolumn{3}{c|}{ Range of parameters}& \multicolumn{3}{c}{ Total Infected: $I_1+I_2$}\\
			\cline{2-7} &  Min & Baseline& Max & $t=50$ days&$t=100$ days& $t=200$ days\\
			\hline \hline 	
			$\theta$&$500$&$1000$&$ 1500 $&$0.71395^{**}$&$ 0.78511^{**}$&$0.76166^{***}$\\
			$ p$  & $0$ &$ 0.5$&$1$ &$0.020314$&$0.0029584$&$0.028397$ \\
			$\beta_{11}$ &$0.175$ & $0.35$&$  0.525$&$ 0.85757^{***}$&$0.8731^{***}$&$0.87175^{***}$\\
			$\beta_{12}$&$0.14$ & $0.28$&$ 0.42$&$ 0.0047432 $&$0.027724$&$-0.030496$\\
			$\beta_{21}$ & $0.0875$ & $0.175$&$0.2625 $&$ 0.0090246$&$-0.012341$&$0.026579$\\
			$\beta_{22}$ & $0.07$ & $0.14$ & $0.21$ & $-0.047262$ & $0.02905$&$-0.037461$\\
			$\mu$  &  $0.045$ & $0.09$&$0.135$&$ -0.7695^{**}$&$ -0.80652^{***} $&$-0.79222^{**}$\\
			$d_{1}$ &$0.0004$& $0.0008$&$ 0.0012$&  $-0.012188$&$0.03368$&$-0.046922$\\
			$d_{2}$ & $0.00005$& $0.0001$&$0.00015$&$-0.025215$&$0.016188$&$-0.043869$\\
			$\gamma_{1}$ & $0.05$& $0.1$&$0.15$&$-0.78315^{**}$&$ -0.84015^{***}$&$-0.82903^{***}$\\
			$\gamma_{2}$ & $0.0625$& $0.13$&$0.1925$&$ 0.010702$&$0.05007$&$0.012449$\\
			\hline
		\end{tabular} 
	\end{tabular}\\
	**: PRCC values: $~0.7$ to $ 0.79
	$ or $-0.7$ to $-0.79$; ***: PRCC values: $~0.8$ to $ 0.99$ or $-0.8$ to $-0.99$
\end{table}

\begin{table}
	\caption{PRCC of model’s parameters at time $t$ days with strong PI and strong efficiency of vaccine, when $\theta=1000$, $\mu= 0.09$, $\beta_{11}=0.35$, $\beta_{12}=0.28$, $\beta_{21}=0.035$, $\beta_{22}=0.028$ as baseline. } \label{sensibilité2}\begin{tabular}{c} 
		\begin{tabular}{c|ccc|rrr} \hline
			Parameters & \multicolumn{3}{c|}{ Range of parameters}& \multicolumn{3}{c}{ Total Infected: $I_1+I_2$}\\
			\cline{2-7} &  Min & Baseline& Max & $t=50$ days&$t=100$ days& $t=200$ days\\
			\hline \hline 	
			$\theta$&$500$&$1000$&$ 1500 $&$0.7381^{**}$&$ 0.76387^{**}$&$0.78486^{**}$\\
			$ p$  & $0$ &$ 0.5$&$1$ &$0.003905$&$0.0037356$&$0.027257$ \\
			$\beta_{11}$ &$0.175$ & $0.35$&$  0.525$&$ 0.86469^{***}$&$0.87427^{***} $&$0.88181^{***}$\\
			$\beta_{12}$&$0.14$ & $0.28$&$ 0.42$&$0.0079816 $&$0.033516$&$0.030158$\\
			$\beta_{21}$ & $0.0175$ & $0.035$&$0.0525 $&$0.0012134$&$0.018087$&$-0.00058438$\\
			$\beta_{22}$ & $0.014$ & $0.028$ & $0.042$ & $0.021841$ & $-0.0038364$&$0.018881$\\
			$\mu$  &  $0.045$ & $0.09$&$0.135$&$ -0.78849^{**}$&$ -0.80304^{***} $&$-0.81535^{***}$\\
			$d_{1}$ &$0.0004$& $0.0008$&$ 0.0012$&  $-0.054627$&$ 0.066816$&$0.019678$\\
			$d_{2}$ & $0.00005$& $0.0001$&$0.00015$&$-0.033227$&$-0.021472$&$-0.028882$\\
			$\gamma_{1}$ & $0.05$& $0.1$&$0.15$&$-0.80324^{***}$&$  -0.83346^{***}$&$-0.84421^{***}$\\
			$\gamma_{2}$ & $0.0625$& $0.13$&$0.1925$&$ -0.0099732$&$-0.02272$&$0.0020891$\\
			\hline
		\end{tabular} 
	\end{tabular}\\
	**: PRCC values: $~0.7$ to $ 0.79$ or $-0.7$ to $-0.79$; ***: PRCC values: $~0.8$ to $ 0.99
	$ or $-0.8$ to $-0.99$
\end{table}

\begin{table}
	\caption{PRCC of model’s parameters at time $t$ days with weak PI and weak efficiency of vaccine, when $\theta=10$, $\mu= 0.0009$, $\beta_{11}=0.35$, $\beta_{12}=0.28$, $\beta_{21}=0.175$, $\beta_{22}=0.14$ as baseline. } \label{sensibilité3}\begin{tabular}{c} 
		\begin{tabular}{c|ccc|rrr} \hline
			Parameters & \multicolumn{3}{c|}{ Range of parameters}& \multicolumn{3}{c}{ Total Infected: $I_1+I_2$}\\
			\cline{2-7} &  Min & Baseline& Max & $t=50$ days&$t=100$ days& $t=200$ days\\
			\hline \hline 	
			$\theta$&$5$&$10$&$ 15 $&$0.45737$&$ 0.37556$&$0.51163^{*}$\\
			$ p$  & $0$ &$ 0.5$&$1$ &$-0.041574$&$0.028378$&$0.030938$ \\
			$\beta_{11}$ &$0.175$ & $0.35$&$  0.525$&$ -0.63334^{*}$&$-0.23892$&$0.355$\\
			$\beta_{12}$&$0.14$ & $0.28$&$ 0.42$&$ -0.23979 $&$-0.24053$&$-0.13989$\\
			$\beta_{21}$ & $0.0875$ & $0.175$&$0.2625 $&$ -0.90072^{***} $&$-0.90502^{***}$&$-0.80837^{***}$\\
			$\beta_{22}$ & $0.07$ & $0.14$ & $0.21$ & $-0.52059^{*}$ & $-0.50519^{*}$&$-0.30843$\\
			$\mu$  &  $0.00045$ & $0.0009$&$0.00135$&$ -0.031697$&$ -0.18722 $&$-0.15951$\\
			$d_{1}$ &$0.0004$& $0.0008$&$ 0.0012$&  $0.012078$&$-0.038623$&$0.01511$\\
			$d_{2}$ & $0.00005$& $0.0001$&$0.00015$&$0.028409$&$0.0088495$&$0.047733$\\
			$\gamma_{1}$ & $0.05$& $0.1$&$0.15$&$-0.12428$&$  0.81303^{***}$&$0.59284^{*}$\\
			$\gamma_{2}$ & $0.0625$& $0.13$&$0.1925$&$ 0.48726$&$0.62754^{*}$&$0.48082$\\
			\hline
		\end{tabular} 
	\end{tabular}\\
	*: PRCC values: $~0.5$ to $ 0.69
	$ or $-0.5$ to $-0.69$; ***: PRCC values: $~0.8$ to $ 0.99
	$ or $-0.8$ to $-0.99$
\end{table}

\begin{table}
	\caption{PRCC of model’s parameters at time $t$ days with weak PI and strong efficiency of vaccine, when $\theta=10$, $\mu= 0.0009$, $\beta_{11}=0.35$, $\beta_{12}=0.28$, $\beta_{21}=0.035$, $\beta_{22}=0.028$ as baseline. } \label{sensibilité4}\begin{tabular}{c} 
		\begin{tabular}{c|ccc|rrr} \hline
			Parameters & \multicolumn{3}{c|}{ Range of parameters}& \multicolumn{3}{c}{ Total Infected: $I_1+I_2$}\\
			\cline{2-7} &  Min & Baseline& Max & $t=50$ days&$t=100$ days& $t=200$ days\\
			\hline \hline 	
			$\theta$&$5$&$10$&$ 15 $&$0.5751^{*}$&$ 0.48818$&$0.61256^{*}$\\
			$ p$  & $0$ &$ 0.5$&$1$ &$0.052835$&$0.014154$&$0.050557$ \\
			$\beta_{11}$ &$0.175$ & $0.35$&$  0.525$&$ -0.70943^{**}$&$-0.47854 $&$0.44458$\\
			$\beta_{12}$&$0.14$ & $0.28$&$ 0.42$ &$-0.16357$&$-0.16371$&$-0.03405$\\
			$\beta_{21}$ & $0.0175$ & $0.035$&$0.0525 $&$ -0.84854^{***}$&$-0.90973^{***}$&$-0.85731^{***}$\\
			$\beta_{22}$ & $0.014$ & $0.028$ & $0.042$ & $-0.17909$ & $-0.22613$&$-0.14446$\\
			$\mu$  &  $0.00045$ & $0.0009$&$0.00135$&$ -0.40329$&$ -0.61646^{*} $&$-0.72754^{**}$\\
			$d_{1}$ &$0.0004$& $0.0008$&$ 0.0012$&  $-0.072168$&$ 0.04039$&$-0.040258$\\
			$d_{2}$ & $0.00005$& $0.0001$&$0.00015$&$0.019586$&$-0.053637$&$-0.030518$\\
			$\gamma_{1}$ & $0.05$& $0.1$&$0.15$&$-0.81298^{***}$&$  0.76378^{**}$&$0.69479^{*}$\\
			$\gamma_{2}$ & $0.0625$& $0.13$&$0.1925$&$ 0.20028$&$0.31528$&$0.2891$\\
			\hline
		\end{tabular} 
	\end{tabular}\\
	*: PRCC values: $~0.5$ to $ 0.69
	$ or $-0.5$ to $-0.69$; **: PRCC values: $~0.7$ to $ 0.79
	$ or $-0.7$ to $-0.79$; \hspace{2cm}***: PRCC values: $~0.8$ to $ 0.99
	$ or $-0.8$ to $-0.99$
\end{table}

\newpage
\section{Figures}

\begin{figure}[!h]
	\centering 
	\includegraphics[width=.7\textwidth]{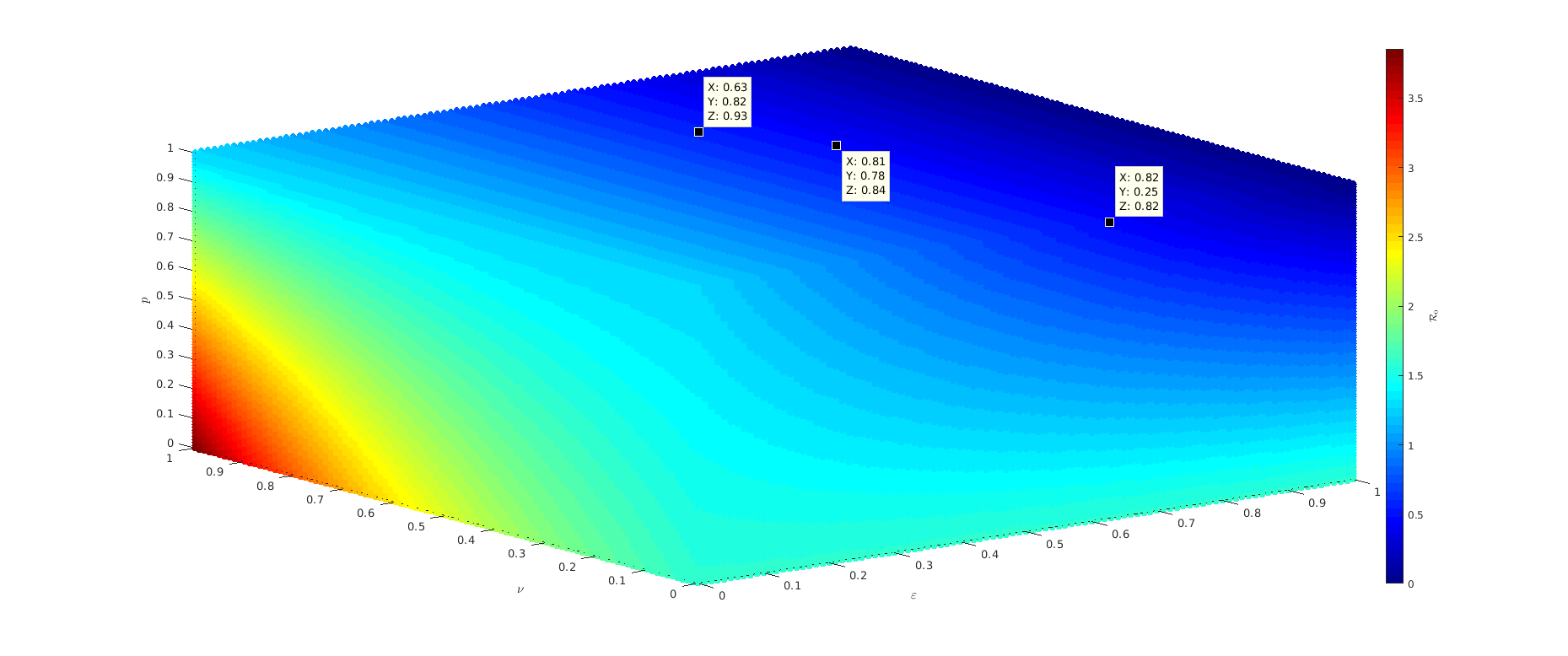}
	\caption{Scatter plots of $\mathcal{R}_0$ with a strong turnover as a function of $\varepsilon$, $\nu$ and $p$ .The parameters are $\theta=1000$, $\beta_{11}= 0.35$, $\beta_{12}=0.28$, $\beta_{21}=0.175, \beta_{22}=0.14$, $\mu=0.09$, $d_1=0.0008$, $d_2=0.0001$, $\gamma_{1}=0.065$, $\gamma_{2}=0.13$.}\label{ScatterST}
\end{figure}

\begin{figure}[!h]
	\centering 
	\includegraphics[width=.7\textwidth]{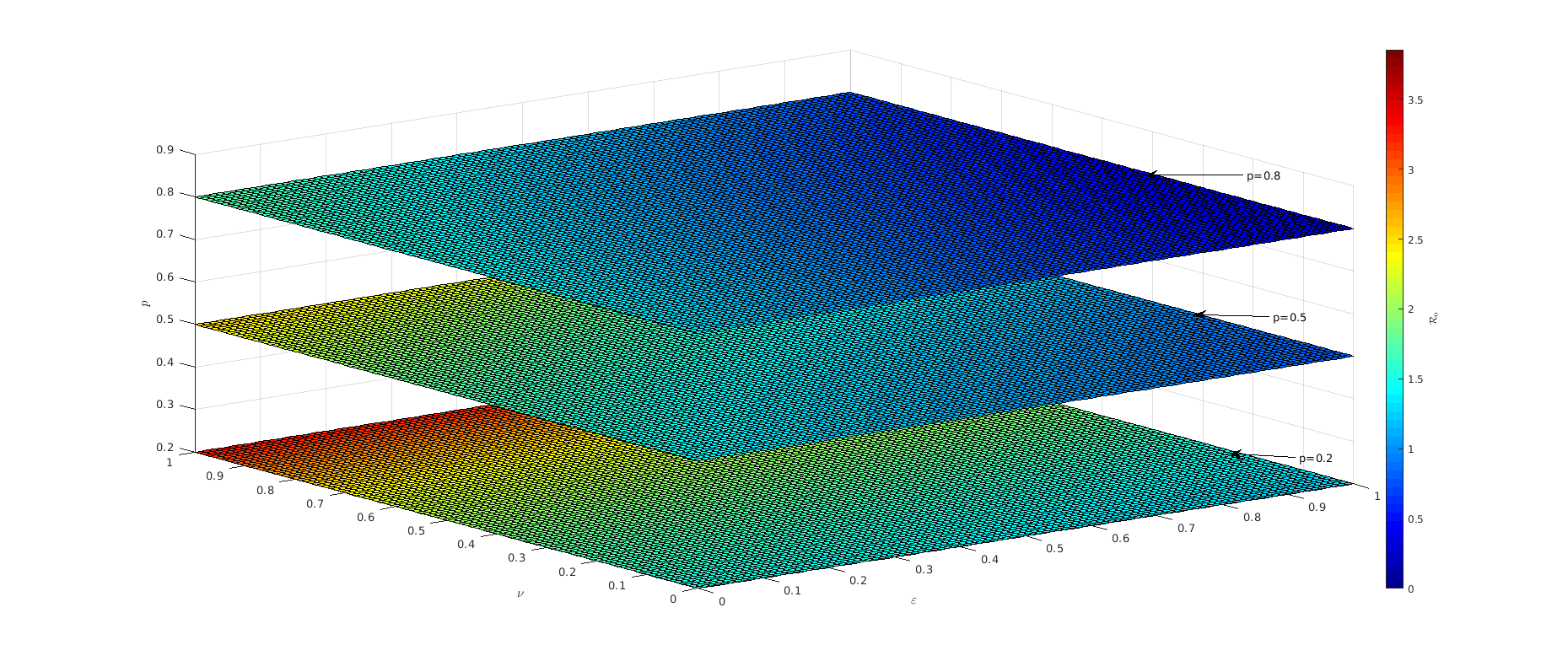}
	\caption{Slice planes of $\mathcal{R}_0$ orthogonal to the p-axis at the values 0.2, 0.5, 0.8 with a strong turnover. The parameters are $\theta=1000$, $\beta_{11}= 0.35$, $\beta_{12}=0.28$, $\beta_{21}=0.175, \beta_{22}=0.14$, $\mu=0.09$, $d_1=0.0008$, $d_2=0.0001$, $\gamma_{1}=0.065$, $\gamma_{2}=0.13$.}\label{SliceST}
\end{figure}

\begin{figure}[!h]
	\centering
	\begin{subfigure}[b]{0.48\textwidth}
		\includegraphics[width=1.1\linewidth]{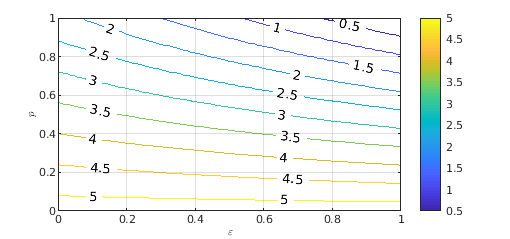}
		\caption{}\label{ContourWT1}
	\end{subfigure}
	\begin{subfigure}[b]{0.48\textwidth}
		\includegraphics[width=1.1\linewidth]{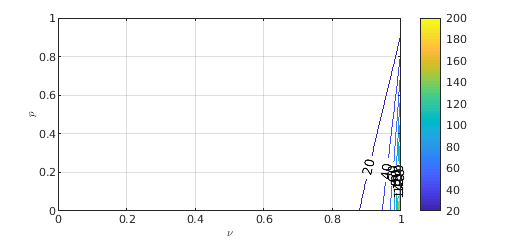}
		\caption{}\label{ContourWT2}
	\end{subfigure}
	
	\begin{subfigure}[b]{0.48\textwidth}
		\includegraphics[width=1.1\linewidth]{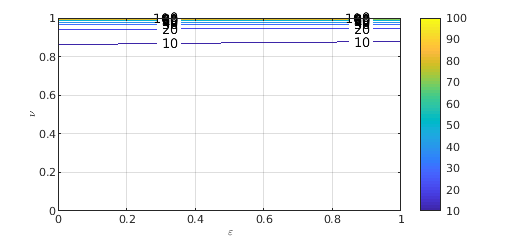}
		\caption{}\label{ContourWT3}
	\end{subfigure}	
	\caption{Contour plots of the basic reproduction number ($\mathcal{R}_0$) of the model (\ref{deterministic}) with a weak turnover as a function of: (a) vaccine coverage, $p$, and vaccine efficiency on the transmission, $\varepsilon$ (fixed $\nu=0.5$); (b) vaccine coverage, $p$, and vaccine efficiency on the ability to enhance recovery, $\nu$ (fixed $\varepsilon=0.5$); (c) vaccine efficiency on the ability of being recovered, $\nu$, and vaccine efficiency on the transmission, $\varepsilon$ (fixed $p=0.5$). The parameters are $\theta=10$, $\beta_{11}= 0.35$, $\beta_{12}=0.28$, $\beta_{21}=0.175, \beta_{22}=0.14$, $\mu=0.0009$, $d_1=0.0008$, $d_2=0.0001$, $\gamma_{1}=0.065$, $\gamma_{2}=0.13$}\label{ContourWT}
\end{figure}

\begin{figure}[!h]
	\centering 
	\includegraphics[width=.7\textwidth]{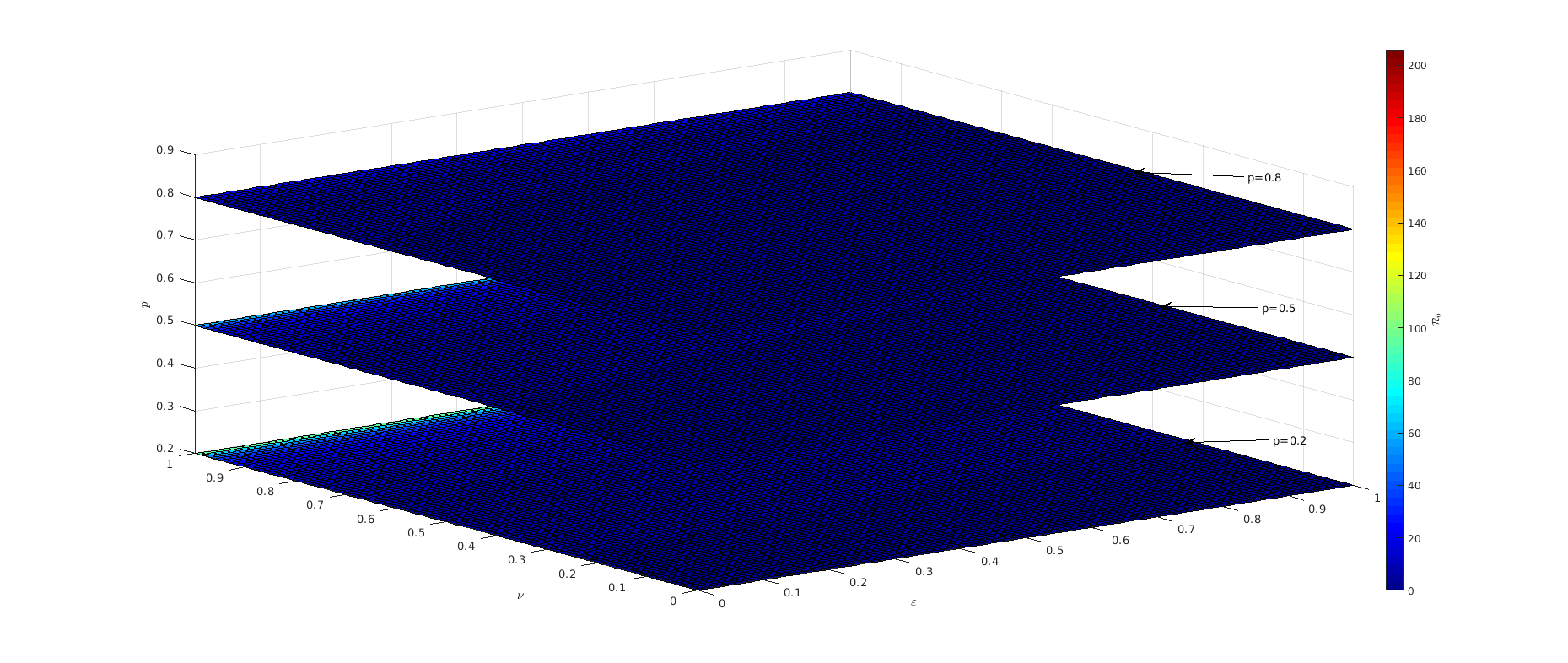}
	\caption{Slice planes of $\mathcal{R}_0$ orthogonal to the p-axis at the values 0.2, 0.5, 0.8 with a weak turnover. The parameters are $\theta=10$, $\beta_{11}= 0.35$, $\beta_{12}=0.28$, $\beta_{21}=0.175, \beta_{22}=0.14$, $\mu=0.0009$, $d_1=0.0008$, $d_2=0.0001$, $\gamma_{1}=0.065$, $\gamma_{2}=0.13$.}\label{SliceWT}
\end{figure}

\begin{figure}[!h]
	\centering 
	\includegraphics[width=.7\textwidth]{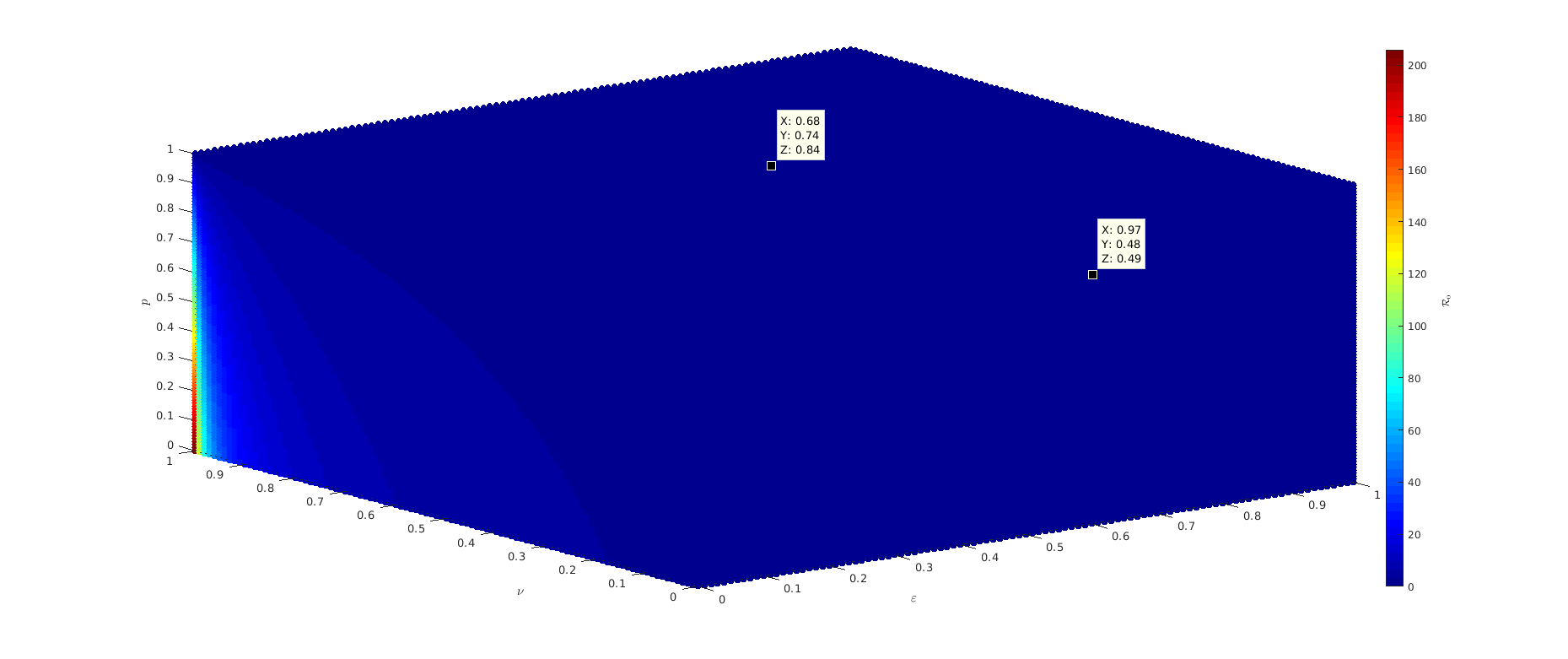}
	\caption{Scatter plots of $\mathcal{R}_0$ with a weak turnover as a function of $\varepsilon$, $\nu$ and $p$ .The parameters are $\theta=10$, $\beta_{11}= 0.35$, $\beta_{12}=0.28$, $\beta_{21}=0.175, \beta_{22}=0.14$, $\mu=0.0009$, $d_1=0.0008$, $d_2=0.0001$, $\gamma_{1}=0.065$, $\gamma_{2}=0.13$.}\label{ScatterWT}
\end{figure}


\begin{figure}[!h]
	\centering
	\begin{subfigure}[b]{0.48\textwidth}
		\includegraphics[width=1.1\linewidth]{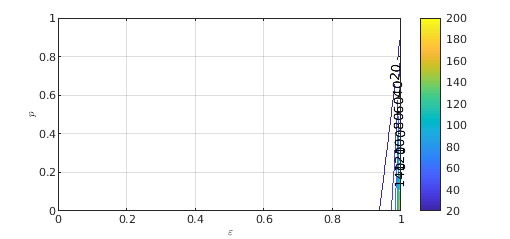}
		\caption{}\label{contourconveWT}
	\end{subfigure}
	\begin{subfigure}[b]{0.48\textwidth}
		\includegraphics[width=1.1\linewidth]{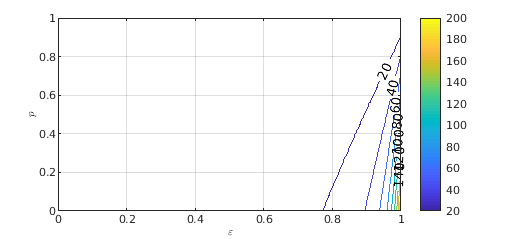}
		\caption{}\label{contourconcaWT}
	\end{subfigure}
	
	\begin{subfigure}[b]{0.48\textwidth}
		\includegraphics[width=1.1\linewidth]{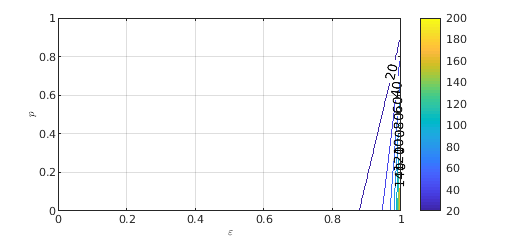}
		\caption{}\label{contourlineaWT}
	\end{subfigure}	
	\caption{Contour plot of the basic reproduction number ($\mathcal{R}_0$) of the model (\ref{deterministic}) with a weak turnover as a function of vaccine coverage, $p$, and vaccine efficiency on the transmission, $\varepsilon$ when: (a)  $\nu=\varepsilon^{2}$(convex relationship); (b) $\nu=\sqrt{\varepsilon}$(concave relationship);   (c)$\nu=\varepsilon$(linear relationship). The parameters are $\theta=1000$, $\beta_{11}= 0.35$, $\beta_{12}=0.28$, $\beta_{21}=0.175, \beta_{22}=0.14$, $\mu=0.09$, $p=0.5$, $d_1=0.0008$, $d_2=0.0001$, $\gamma_{1}=0.065$, $\gamma_{2}=0.13$.}
\end{figure}

\end{document}